
\documentclass[journal,final]{IEEEtran}
%

\usepackage{zref-base}
\usepackage{atveryend}
\makeatletter
\AfterLastShipout{%
  \if@filesw   
    \begingroup
      \zref@setcurrent{default}{\the\value{equation}}%
      \zref@wrapper@immediate{%
        \zref@labelbyprops{eq-last}{default}%
      }%
    \endgroup
  \fi
}
\makeatother

%
\usepackage{cite}

%
\ifCLASSINFOpdf
  \usepackage[pdftex]{graphicx}
\else
  \usepackage[dvips]{graphicx}
\fi
%

%
\usepackage{amsmath}
\interdisplaylinepenalty=2500

%
\usepackage{algpseudocode}
\newsavebox{\ieeealgbox}
\newenvironment{boxedalgorithmic}
  {\begin{lrbox}{\ieeealgbox}
   \begin{minipage}{\dimexpr\columnwidth-2\fboxsep-2\fboxrule}
   \begin{algorithmic}[1]}
  {\end{algorithmic}
   \end{minipage}
   \end{lrbox}\noindent\fbox{\usebox{\ieeealgbox}}}

%
\usepackage{array}


\ifCLASSOPTIONcompsoc
  \usepackage[caption=false,font=normalsize,labelfont=sf,textfont=sf]{subfig}
\else
  \usepackage[caption=false,font=footnotesize]{subfig}
\fi

%

%

%
\usepackage{url}

\usepackage{hyperref}
\usepackage{psfrag}
\usepackage{dsfont,times,import}
\usepackage{amssymb}
\usepackage{color}
\usepackage{makecell}
\usepackage{amsthm}

\newtheorem{theorem}{Theorem}

\newtheorem{assumption}{}

\DeclareMathOperator*{\argmax}{arg\,max}



\hyphenation{op-tical net-works semi-conduc-tor}

\begin{document}

\title{Algorithmic Bidding for Virtual Trading\\ in Electricity Markets}

\author{Sevi~Baltaoglu,~\IEEEmembership{}
            Lang~Tong,~\IEEEmembership{Fellow,~IEEE,}
            and~Qing~Zhao,~\IEEEmembership{Fellow,~IEEE}
\thanks{Sevi Baltaoglu, Lang Tong, and Qing Zhao are with the School of Electrical and Computer Engineering, Cornell University, Ithaca, NY, 14850 USA e-mail: {\tt \{msb372,lt35,qz16\}@cornell.edu}.}%
\thanks{This work was supported in part by the National Science Foundation under Award 1809830 and 1816397.}%
\thanks{Part of the work was presented at Conference on Neural Information Processing Systems (NIPS) 2017.}}

\markboth{Journal of \LaTeX\ Class Files,~Vol.~14, No.~8, August~2015}%
{Shell \MakeLowercase{\textit{et al.}}: Bare Demo of IEEEtran.cls for IEEE Journals}


\maketitle

\begin{abstract}
We consider the problem of optimal bidding for virtual trading in two-settlement electricity markets. A virtual trader aims to arbitrage on the differences between day-ahead and real-time market prices; both prices, however, are random and unknown to market participants. An online learning algorithm is proposed to maximize the cumulative payoff over a finite number of trading sessions by allocating the trader's budget among his bids for $K$ options in each session. It is shown that the expected payoff of the proposed algorithm converges, with an almost optimal convergence rate, to the expected payoff of the global optimal corresponding to the case when the underlying price distribution is known. The proposed algorithm is also generalized for trading strategies with a risk measure. By using both cumulative payoff and Sharpe ratio as performance metrics, evaluations were performed based on historical data spanning ten year period of NYISO and PJM markets. It was shown that the proposed strategy outperforms standard benchmarks and the S\&P 500 index over the same period.  
\end{abstract}

\begin{IEEEkeywords}
Electricity markets, virtual transactions, algorithmic bidding, online machine learning.
\end{IEEEkeywords}

%
\IEEEpeerreviewmaketitle

\section{Introduction}
\IEEEPARstart{T}{he} wholesale electricity market in the United States consists of a day-ahead (DA) and a real-time (RT) markets. Market participants submit their bids to buy (and offers to sell) electricity to the DA market approximately one day ahead of time. The bids and offers cleared in the DA market are financially binding. The market clearing process sets the DA prices for each hour of the day and at each location of the network. 

In the RT market, the load (thus the generation) may not match to the cleared amount in the DA market, and the RT prices of electricity may also be different from their DA counterparts due to a variety of reasons, including the unexpected levels of demand and supply, unplanned outages, unpredictable weather conditions\cite{Lietal:15}, and possibilities of market participants exercising market power\cite{PJM:15}.

Price discrepancies between the DA and RT markets represent a form of market inefficiency. To promote {\em price convergence} between the two markets, in early 2000s, virtual trading was introduced in the U.S. electricity markets. Virtual trading is a financial mechanism that allows market participants and external financial entities to arbitrage on the differences between DA and RT prices. Currently, cleared virtual transactions represent a significant fraction of total energy trade. In 2013, the cleared virtual transactions in the five major electricity markets was 13\% \footnote{This number goes up to 38\% with the inclusion of up-to-congestion transactions of PJM.} of the total load\cite{Parsonsetal:15}. 

Empirical and theoretical studies have shown that increased competition due to virtual trading results in price convergence, thus improving market efficiency\cite{PJM:15, Lietal:15, Hogan:16, Tangetal:16, Jha&Wolak:15, Saravia:03, Guleretal:10, Wooetal:15}. Particularly, it has been argued in\cite{Tangetal:16} that a virtual trader makes profit if and only if his participation drives the DA-RT price difference toward zero. Hence, to reach the socially optimal dispatch level, it is important that the virtual traders bid optimally. However, the DA and RT wholesale prices are random due to uncertainties in demand, supply, and operation conditions. Therefore, in order to learn the optimal trading strategy, a virtual trader needs to update his belief using all the new information, which allows him to adapt his bid accordingly each day. 

\subsection{Main Contributions}

In an electricity market, there are potentially thousands of trading options. Due to system congestion and losses, electricity prices vary in time and across locations. The goal of this work is to develop an {\em online learning} approach to virtual trading where the trader, who is constrained by a certain budget, aims to determine profitable trading options and distribute his budget among them. By online learning we mean that bids are constructed sequentially and adaptively based on the new information available. In particular, we consider the objective of maximizing the expected total payoff as well as one that involves a mean-variance type of risk measure. 

The main contribution of this work is a polynomial-time online learning algorithm that maximizes the expected cumulative return over a finite $T$ trading horizon. This result is also generalized to an objective based on a form of mean-variance risk measure.  Note that obtaining the optimal bidding strategy with known joint DA and RT price distribution is itself nontrivial due to the non-convexity of the problem.  Our result provides an algorithmic bidding strategy that converges to the global optimal bidding strategy.  We show further that the rate of convergence achieves the lower bound of convergence of all such algorithms up to a $\sqrt{\log(T)}$ factor. To establish the order-optimality of convergence, the regret measure is used. In particular, the regret of an online learning algorithm is the difference between the expected total T-period payoff of that algorithm and that of the global optimal bidding strategy when the underlying probability model of DA and RT prices is known. 

A significant part of this work is to evaluate the performance of the proposed strategy empirically using historical data spanning the time period between 2006 and 2016 of NYISO and PJM markets. Extensive empirical analysis show that the proposed strategy consistently outperforms benchmark heuristic methods, derived from other machine learning approaches, and achieves significant profit. It is worth noting that our empirical results also show that PJM and NYISO wholesale electricity markets are both profitable although PJM market presents better opportunities to traders compared to NYISO. 

\subsection{Related Work}

Relevant literature falls into two categories. The one that is more directly related to our work focuses on developing online learning algorithms to similar problems in the machine learning literature. The second one focuses on understanding the effects of virtual trading on the two-settlement electricity market.
 
Within the machine learning literature, most relevant work is algorithmic bidding in online advertising markets\cite{Weedetal:16}. The authors of\cite{Weedetal:16} consider the online bidding problem without a budget constraint where the bidder observes market prices only if his bid is accepted at that period. In contrast, our problem falls into the category of stochastic \emph{experts} problem because historical market prices are observable regardless of the bid of the virtual trader. The analysis in \cite{Weedetal:16} on the lower bound of {\em regret} does not hold for the problem considered here. Furthermore, their algorithm cannot be used here due to the budget constraint. Our problem is a special case of the setting studied by Kleinberg and Slivkins\cite{Kleinberg&Slivkins:10}. Unfortunately, the computational complexity of the algorithm in\cite{Kleinberg&Slivkins:10} grows exponentially with the number of options. Therefore, it becomes intractable in practice. Also, the regret lower bound in\cite{Kleinberg&Slivkins:10} doesn't provide a bound for our problem with a specific payoff. 

Indirectly related to this work are works that analyze the impact of virtual transactions on the overall market efficiency. Theoretical analysis on the impact of virtual trading was conducted in \cite{Tangetal:16} and \cite{Matheretal:17} from a game theoretic perspective. Under a single trading location model, these papers analyzed the Nash equilibrium (NE) behavior of virtual traders who have their fixed individual beliefs about the market. Tang et al. \cite{Tangetal:16} showed that, under NE, if the belief of virtual traders is correct on average, the price difference between DA and RT converges to zero as the number of virtual traders increases. Mather, Bitar, and Poolla \cite{Matheretal:17} presented a simple learning strategy that guarantee convergence to the NE. However, convergence to NE doesn't guarantee price convergence. Different from the problem of learning the NE in a game theoretic environment with fixed beliefs\cite{Matheretal:17}, we study the online learning problem of a virtual trader who updates his belief each day using new observations of DA and RT prices in order to converge to the optimal trading strategy. Furthermore, we require that not only the bidding policy converges to the optimal policy but also the convergence rate is order-optimal. 

Among empirical studies, \cite{Lietal:15} and \cite{Jha&Wolak:15} are the most relevant to our work. Both evaluate market efficiency before and after virtual trading was introduced in the electricity markets. More specifically, in \cite{Lietal:15}, a chance constraint portfolio selection problem was solved by estimating the distribution of DA-RT price difference, modeled as Gaussian mixture hidden Markov model, to determine the trading strategy, whereas, in \cite{Jha&Wolak:15}, hypothesis testing is used to determine the existence of a profitable trading strategy at each location. Empirical analysis using CAISO data shows that virtual trading increases efficiency but the market is still inefficient. Some of the other interesting empirical studies on the impact of virtual trading are \cite{Borensteinetal:08, Saravia:03, Birgeetal:17}, and \cite{Guleretal:10}. 

\section{Virtual Trading in Electricity Markets}

\subsection{Virtual Transactions in the Two-Settlement Market System}
 
A virtual transaction on any given day (session) involves transactions in the DA and RT markets for power at a particular location and in a particular hour.  Herein we refer to each location-hour pair with which a transaction is associated as a trading option. Typically, two types of virtual transactions are allowed in the US wholesale electricity markets: (i) virtual demand bid and (ii) virtual supply bid. A virtual demand bid is a bid to buy energy in the DA market with an obligation to sell back exactly the same amount in the RT market. A virtual supply bid is a bid to sell energy in the DA market with an obligation to buy back exactly the same amount in the RT market. 

The DA market takes place one day ahead of the actual power delivery.  In the DA market, the independent system operator (ISO) receives bids from (actual) generators and load serving entities as well as virtual bidders. After the DA market closes on day $t-1$, the bids in the DA market are processed by the ISO via a security constrained economic dispatch that accepts a subset of virtual bids and determines the amount of power to generate for each generator and the associated DA prices.  
	
The RT market takes place at the time of actual power delivery on day $t$.  The ISO adjusts the dispatch level according to the actual system operating conditions and compute the RT prices.  The virtual bids that are accepted in the DA market are settled in the RT market, and a virtual bidder with an accepted bid is paid at the difference of the DA and RT prices.  We give next a more precise mathematical description of the settlement process.

\subsection{A Mathematical Model of Virtual Trading}
\label{sec:virtual_trading_model}

Recall that a trading option is defined by a pair of a location and a particular time of power delivery. A location can be a bus of the transmission grid or a trading zone.  The time of power delivery is a specific hour in a 24 hour trading horizon.   

Let $\lambda_{t,k}$ and $\pi_{t,k}$ be the DA and RT prices (in \$/MWh) of option $k$ on day $t$, respectively. Let $x_{t,k}$ be a virtual bid (in \$/MWh) for option $k$ on day $t$. A {\em virtual demand bid} is a bid to buy a unit quantity of electricity at a particular location and hour in the DA market with the obligation to sell the same amount at the same location and hour in the RT market. The demand bid $x_{t,k}$ is cleared if the bid price $x_{t,k}$ is higher than or equal to the DA price $\lambda_{t,k}$, \textit{i.e.}, $x_{t,k} \geq \lambda_{t,k}$. For the accepted bid, the payoff is the difference between the RT and DA prices of that option, \textit{i.e.},
\begin{displaymath}
(\pi_{t,k} - \lambda_{t,k}) \mathds{1}\{x_{t,k} \geq \lambda_{t,k}\}
\end{displaymath}
where $\mathds{1}\{\cdot\}$ is the indicator function that is one if its argument is true and zero otherwise. 

Similarly, a {\em virtual supply bid} is an offer to sell electricity in the DA market with the obligation to buy back in the RT market. The supply bid $x_{t,k}$ is cleared if the bid price $x_{t,k}$ is lower than or equal to the market clearing price $\lambda_{t,k}$, \textit{i.e.}, $x_{t,k} \leq \lambda_{t,k}$. For the accepted bid, the payoff is given by
\begin{displaymath}
(\lambda_{t,k}- \pi_{t,k}) \mathds{1}\{x_{t,k} \leq \lambda_{t,k}\}.
\end{displaymath}

The payoff for the two types of bids can be expressed by a single expression through a simple translation.  To this end, we assume that DA prices are bounded with known upper/lower bounds, \textit{i.e.} $l_\lambda< \lambda_{t,k} < u_\lambda$. Then, regardless of the type of bids, the payoff obtained from option $k$ on day $t$ can be written as: 
\begin{displaymath}
(\pi_{t,k}' - \lambda_{t,k}') \mathds{1}\{x_{t,k}' \geq \lambda_{t,k}'\}
\end{displaymath}
where, for a virtual demand bid, $x_{t,k}'=x_{t,k}-l_\lambda$, $\lambda_{t,k}'=\lambda_{t,k}-l_\lambda$, and $\pi_{t,k}'=\pi_{t,k}-l_\lambda$; and, for a virtual supply bid, $x_{t,k}'=u_\lambda-x_{t,k}$, $\lambda_{t,k}'=u_\lambda-\lambda_{t,k}$, and $\pi_{t,k}'=u_\lambda-\pi_{t,k}$. In this case, observe that $x_{t,k}' =0$ is equivalent to not bidding for option $k$ on day $t$. 

For notational convenience, hereafter we use $\lambda_{t,k}$, $\pi_{t,k}$, and $x_{t,k}$ instead of $\lambda_{t,k}'$, $\pi_{t,k}'$, and $x_{t,k}'$ to represent the translated price and bid variables. The accumulative return for a T-period trading horizon for a given bid $x_{t,k}$ sequence and DA/RT prices, irrespective the type of bids, is given by
\begin{equation} 
\label{eq:CumulativeReturn}
\sum_{t=1}^T (\pi_{t,k} - \lambda_{t,k}) \mathds{1}\{x_{t,k} \geq \lambda_{t,k}\}.
\end{equation}

Next, we study the problem of a virtual trader who considers to bid on $K$ options and aims to determine the optimal value of $x_{t,k}$ for each $k \in \{1,...,K\}$ under a budget constraint. Note that a trader can submit multiple bids for the same option, including demand and supply bids, simultaneously.

\section{Online Learning Approach to Virtual Trading}

In this section, we develop an algorithmic bidding strategy aimed at maximizing expected payoff by allocating a fixed budget among $K$ options without assuming the knowledge of underlying joint distribution of the DA and RT prices. 

An outline of our proposed approach is in order. Since the expected payoff cannot be calculated analytically due to the unknown distribution, we consider the maximization of the sample mean payoff, which is equivalent to an empirical risk minimization (ERM) problem\cite{Vapnik:92}. For fixed trading horizon $T$, solving this ERM amounts to solving a multiple-choice knapsack problem\cite{Kellereretal:04}, which is NP hard.  We propose a polynomial-time approximation algorithm, referred to as dynamic programming on discrete set (DPDS) that converges to the optimal bidding strategy as the number of trading sessions increases. More importantly, the DPDS is order optimal in terms of its rate of convergence as shown in Sec.\ref{sec:convergence}. We also extend this algorithm to deal with the objective of optimizing a variant of mean-variance measure.

\subsection{Problem Formulation}

Let $\lambda_{t}=[\lambda_{t,1},...,\lambda_{t,K}]^\intercal$ and $\pi_{t}=[\pi_{t,1},...,\pi_{t,K}]^\intercal$ be the vector of DA and RT prices on day $t$, respectively. Similarly, let $x_t = [x_{t,1},...,x_{t,K}]^\intercal$ be the vector of bids for day $t$. At the end of each day, the DA and RT prices of all options are observed. Therefore, before choosing the bid for day $t$, all the information the virtual trader has is a vector\footnote{In practice, the bid for day $t$ needs to be chosen before observing the full vector of RT prices of day $t-1$. However, in that case, $I_{t-1}=\{x_i,\lambda_i,\pi_i\}_{i=1}^{t-2}$ can be used instead without loss of generality.} $I_{t-1}$ containing his observation and decision history $\{x_i,\lambda_i,\pi_i\}_{i=1}^{t-1}$. Consequently, a bidding policy $\mu$ is defined as a sequence of decision rules, \textit{i.e.}, $\mu = (\mu_0,\mu_1...,\mu_{T-1})$, such that, at time $t-1$, $\mu_{t-1}$ maps the information history $I_{t-1}$ to the bid $x_t$ of day $t$.

The objective is to determine a bidding policy $\mu$ that maximizes the expected cumulative payoff over T days subject to a budget constraint for each individual day. From (\ref{eq:CumulativeReturn}), the optimization problem can be written as
\begin{align}
\label{eq:optproblem}
& \underset{\mu}{\text{maximize}}
& &\mathbb{E}\left( \sum_{t=1}^T (\pi_{t} - \lambda_{t})^\intercal \mathds{1}\{x_{t}^\mu \geq \lambda_{t}\} \right)  \\
& \text{subject to}
& & \|x_{t}^\mu\|_1 \leq B, \qquad \forall t=1,...,T,\notag \\
& & & x_{t}^\mu \geq 0, \quad \qquad \forall t=1,...,T,\notag
\end{align}
where $x_t^\mu$ is the (translated) bid determined by policy $\mu$, $\mathds{1}\{x_{t}^\mu \geq \lambda_{t}\}$ the vector of indicator functions with the $k$th entry corresponding to $\mathds{1}\{x_{t,k}^\mu \geq \lambda_{t,k}\}$, and $B$ the auction budget\footnote{This budget provides an upper bound to DA market spending in the case of demand bids only and non-negative DA prices. However, it becomes artificial with the inclusion of supply bids. In the general setting, the budget constraint restricts the number of options to bid and leads to the determination of bid values that provides the best payoff per unit of bid.} of the virtual trader. The expectation is taken with respect to randomness in  $\{\pi_t, \lambda_t\}_{t=1}^T$ and policy $\mu$.

The joint distribution of the DA and RT prices is unknown to the virtual trader. Hence, it is not possible to solve the optimization problem analytically. Even if the joint distribution was known, the above optimization can be non-convex, and obtaining global optimal policy is nontrivial. Instead, virtual trader uses his observation history to obtain the optimal bid.

\subsection{An ERM approach}

Because past DA and RT prices are observable, one can calculate the (empirical) average payoff that could have been obtained up to the current day by a fixed bid $x \in \mathcal{F}$ where $\mathcal{F} = \{x \in \Re^K:x\geq 0, \|x\|_1 \leq B\}$ is the feasible set of bids. Specifically, the average payoff $\bar{r}_{t,k}(x_k)$ from option $k$ with fixed bid $x_k$ in $t$ trading sessions is 
\begin{displaymath}
\bar{r}_{t,k}(x_k)= \frac{1}{t} \sum_{i=1}^t (\pi_{i,k}-\lambda_{i,k})\mathds{1}{\{x_k \geq \lambda_{i,k}\}}.
\end{displaymath}

Note that $\bar{r}_{t,k}(x_k)$ is a piece-wise constant function with at most $t$ breakpoints, each corresponding to a new DA price observed in the past $t$ periods. Let the vector of order statistics of the observed DA prices $\{\lambda_{i,k}\}_{i=1}^t$ be $\lambda^{(t,k)}$ as illustrated in Fig.~\ref{fig:averagepayoff3} (for $t=4$). Let $r^{(t,k)}$ be the associated vector of average payoffs where $r^{(t,k)}_j$ is the average payoff $\bar{r}_{t,k}(\lambda^{(t,k)}_j)$ for fixed bid $\lambda^{(t,k)}_j$. Then, $\bar{r}_{t,k}(x_k)$ can be expressed by the pair $\left(\lambda^{(t,k)} , r^{(t,k)} \right)$ as shown in Fig.~\ref{fig:averagepayoff3}.

\begin{figure}[!t]
\centering
\def\svgwidth{0.85\linewidth}
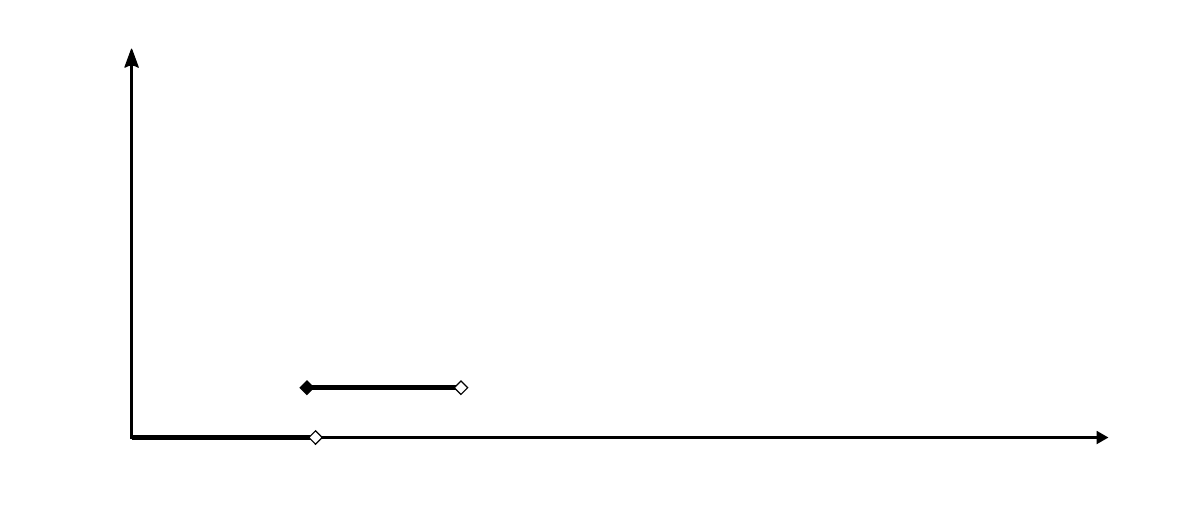
\caption{Example of a piece-wise constant average payoff function of option $k$ when $t=4$.}
\label{fig:averagepayoff3}
\end{figure}

The overall average payoff function $\bar{r}_t(x)$ is the sum of average payoff functions of individual options. Now, let's consider the maximization of $\bar{r}_t(x)$,
\begin{equation}
\label{eq:avg_opt}
\max_{x \in \mathcal{F}} \bar{r}_t(x) =\max_{x \in \mathcal{F}} \sum_{k=1}^K \bar{r}_{t,k}(x_k).
\end{equation}
The above ERM problem in (\ref{eq:avg_opt}), unfortunately, is NP-hard. This result can be obtained by showing that, due to the piece-wise constant structure of the average payoff function, solving (\ref{eq:avg_opt}) is equivalent to solving a multiple-choice knapsack problem (MCKP), which is known to be NP-hard (see Appendix~\ref{appendix:np_hardness}).

\subsection{DPDS: A Polynomial-Time Online Learning Algorithm}

We now derive a polynomial-time algorithm referred to as dynamic programming on discrete set (DPDS). The idea behind DPDS is to discretize the feasible set using intervals of equal length and optimize the average payoff on this new discrete set via a dynamic program. 

Let $\alpha_t$ be an integer sequence increasing with $t$, and $\mathcal{D}_t = \{0,B/\alpha_t, 2B/\alpha_t,...,B\}$ is a sequence of equally placed grid points in $[0, B]$ with increasing density wtih $t$. Then, the new discrete set is given as
$\mathcal{F}_t= \{x \in \mathcal{F}: x_k \in  \mathcal{D}_t,  \forall k \in \{1,...,K\}\}$. Our goal is to optimize $\bar{r}_t(.)$ on the new set $\mathcal{F}_t$ rather than $\mathcal{F}$, \textit{i.e.},
\begin{equation}
\label{opt:dp_opt}
\max_{x_{t+1} \in \mathcal{F}_t} \bar{r}_t(x_{t+1})= \max_{x_{t+1} \in \mathcal{F}_t} \sum_{k=1}^K \bar{r}_{t,k}(x_{t+1,k}).
\end{equation}

Observe that, for fixed $t$, this can be written as a multistage decision problem with $K$ stages as follows: the state of stage $k$ is the remaining budget $b_k \in D_t$, and  the action (decision) of stage $k$ is the bid value $x_{t+1,k} \in A_{t,b_k}$ of option $k$ where $A_{t,b_k}=\{x_k \in D_t: x_k \leq b_k\}$. In this case, $D_t$ is the state space, $A_{t,b_k}$ the action space of stage $k$, and $\bar{r}_{t,k}(x_{t+1,k})$ the payoff (reward) of stage $k$ for taking action $x_{t+1,k}$. 

Now, we define the maximum payoff one can collect in state $b$ over the remaining $n$ stages as $V_n(b)$.  Then, the Bellman equation can be used to solve for $V_K(B)$ which gives the optimal solution to (\ref{opt:dp_opt}). This type of dynamic programming approach has been used to solve 0-1 Knapsack problems including MCKP\cite{Dudzinski&Walukiewicz:87}. However, that approach results in pseudo-polynomial computational complexity in the case of 0-1 Knapsack problems. The discretization of the feasible set with equal interval length reduces the computational complexity to polynomial time. More specifically, the total computational complexity of DPDS  is $O(K\max(t,\alpha_t^2))$ at each day $t$ t (see
Appendix~\ref{appendix:comp_complexity} for detailed analysis). 

\subsection{Risk-Averse Learning}

Maximizing expected profit is not necessarily a prudent strategy in algorithmic bidding. Often the risk of a particular strategy needs to be taken into account. A commonly used metric to measure the effectiveness of a strategy is the Sharpe ratio, which is the ratio of the expected return and the standard deviation of the return. In essence, this requires a trade-off between maximizing the expected return and minimizing the variance.

In order to distribute the budget among the options with high payoff and low variance, we extend DPDS algorithm to the optimization of sum of sample mean-variance of all options (a variant of the well known mean-variance strategy \cite{Markowitz:52}). Let $\nu_{i,k}$ denotes $(\pi_{i,k}-\lambda_{i,k})$. The sample mean-variance function for option $k$ can be written as
\begin{align}
\label{eq:sample_mean_var}
\bar{r}_{t,k}^{(\rho)}(x_k) & = \bar{r}_{t,k}(x_{k}) \notag\\
 &\quad - \frac{\rho}{t-1}\sum_{i=1}^t\left(\nu_{i,k}\mathds{1}\{x_{k} \geq \lambda_{i,k}\}-\bar{r}_{t,k}(x_{k})\right)^2 \notag\\
&  = \bar{r}_{t,k}(x_{k}) +\rho \frac{t}{t-1} \bar{r}_{t,k}(x_{k})^2 \notag\\
& \quad -\rho\frac{t}{t-1}\left(\frac{1}{t}\sum_{i=1}^t\nu_{i,k}^2\mathds{1}\{x_{k} \geq \lambda_{i,k}\}\right). 
\end{align}

In the last equality, observe that $\bar{r}_{t,k}^{(\rho)}(x_k)$ is also a piece-wise constant function with the same breakpoints as $\bar{r}_{t,k}(x_{k})$. So, the extension of DPDS follows from using $\bar{r}_{t,k}^{(\rho)}(x_t)$ instead of  $\bar{r}_{t,k}(x_{k})$ while solving the Bellman equation. The value of $\bar{r}_{t,k}^{(\rho)}(x_k)$ can be obtained by additionaly updating the value of the last term in (\ref{eq:sample_mean_var}). This update is similar to $\bar{r}_{t,k}(x_{k})$ update. (See the algorithm pseudo-code given in Fig.~\ref{alg:dpds}.)

\subsection{Order Optimality of DPDS}
\label{sec:convergence}

We present a performance analysis of DPDS in this section. Our results are of two types. First is to show that DPDS converges to the globally optimal bidding strategy as the trading horizon $T \rightarrow \infty$.  The second is to show that the rate of convergence of DPDS is order-optimal up to a $\sqrt{\log(T)}$ factor. This result shows that DPDS has a strong convergence property over finite trading  horizons.
  
Observe that $\bar{r}_{t,k}^{(\rho)}(x_k) = \bar{r}_{t,k}(x_k)$ when $\rho =0$. Here, we present our analysis for the sum of mean-variance case (for any choice of $\rho\geq0$) that includes the expected return ($\rho =0$) as a special case.

For performance analysis, it is necessary to make several assumptions.  These assumptions do not limit the implementation of the algorithm; they are necessary to make the performance guarantee of DPDS precise. The assumptions, \ref{assumption:iid}, \ref{assumption:bounded}, and \ref{assumption:lipschitz} that are used for performance analysis are given below.

\begin{assumption} 
\label{assumption:iid}
The DA and RT prices $(\lambda_t,\pi_t)$ are drawn independently\footnote{For a similar assumption, see Jha and Wolak\cite{Jha&Wolak:15}, who showed that one cannot reject the hypothesis that the autocorrelation matrices of DA-RT price differences beyond first leg are zero. Hence, the assumption is reasonable due to prices of day $t-1$ being unobservable before bidding for day $t$ in reality.} and identically\footnote{This implies that the DA price is independent of $x_t$, which is reasonable for any market where an individual has negligible impact on the market price.} over time $t$ from an unknown joint distribution $f(\lambda_t,\pi_t)$. 
\end{assumption}

\begin{assumption} 
\label{assumption:bounded}
The payoff resulting from bidding on any node $k \in \{1,...,K\}$ is a bounded random variable with support in $[l, u]$ for any $x \in \mathcal{F}$, \textit{i.e.} $l\leq (\pi_{t,k}-\lambda_{t,k})\mathds{1}\{x_{k} \geq \lambda_{t,k}\} \leq u$.
\end{assumption}

Recall that $\nu_{t,k}$ denotes $(\pi_{t,k}-\lambda_{t,k})$. Define the expected payoff at day $t$ of node $k$ given the bid $x_{t,k}$ as
\begin{displaymath}
r_k(x_{t,k}) = \mathbb{E}(\nu_{t,k}\mathds{1}\{x_{t,k} \geq \lambda_{t,k}\}|x_{t,k}),
\end{displaymath}
and the variance of the payoff of node $k$ given the bid $x_{t,k}$ as
\begin{displaymath}
v_k(x_{t,k}) = \mathbb{E}\left((\nu_{t,k}\mathds{1}\{x_{t,k} \geq \lambda_{t,k}\}-r_k(x_{t,k}))^2 |x_{t,k}\right).
\end{displaymath}
Then, the sum of mean-variance of all nodes will be given by
\begin{displaymath}
r^{  (\rho) } (x_t) = \sum_{k=1}^K (r_k(x_{t,k})  - \rho v_k(x_{t,k}) ).
\end{displaymath}

\begin{assumption} 
\label{assumption:lipschitz}
$r^{(\rho)}(.)$ is Lipschitz continuous on $\mathcal{F}$ with p-norm and Lipschitz constant $L$.
\end{assumption}

Observe that if DA and RT prices have a bounded support and the distribution $f(\lambda_t,\pi_t)$ is uniformly continuous and uniformly bounded on the union of that support and the feasible set $\mathcal{F}$, then assumptions \ref{assumption:bounded} and  \ref{assumption:lipschitz} are satisfied.

For $\rho \geq 0$, the problem of the virtual trader is to find a bidding policy $\mu$ such that
\begin{equation}
\label{eq:genoptprob}
\max_{\mu:\text{ $x_t^\mu \in \mathcal{F}$ $\forall$ $t$}}  \mathbb{E}\left(\sum_{t=1}^T r^{(\rho)}(x_t^\mu)\right),
\end{equation}
which is equivalent to (\ref{eq:optproblem}) when $\rho=0$. Due to \ref{assumption:iid}, optimal solution to (\ref{eq:genoptprob}) under known distribution of $(\pi_{t},\lambda_{t})$ does not depend on $t$ and is given by
\begin{displaymath}
x^* = \argmax_{x \in \mathcal{F}} r^{  (\rho) } (x). 
\end{displaymath}

Following the online machine learning literature, we measure the performance of any bidding policy $\mu$ by its regret $\mathcal{R}_T^\mu(f)$, defined by the difference between the total expected payoff of policy $\mu$ and that of the optimal solution $x^*$, \textit{i.e.},
\begin{displaymath}
\mathcal{R}_T^\mu(f) = \sum_{t=1}^T\mathbb{E}\left(r^{(\rho)}(x^*) - r^{(\rho)}(x^\mu_t)\right).
\end{displaymath}
By definition, the regret is monotonically increasing for any policy $\mu$ and grows linearly with $T$ for the worst possible $\mu$. Since we define optimality as maximizing the expected payoff, observe that a policy $\mu$ converges to the optimal solution if the incremental regret $\mathbb{E}\left(r^{(\rho)}(x^*) - r^{(\rho)}(x^\mu_t)\right)$ goes to zero as $t \rightarrow \infty$. 

Theorem~\ref{thm:upper_smv} below shows that the expected payoff of DPDS converges to the expected payoff of the optimal solution $x^*$. More precisely, it characterizes the rate of convergence and the regret growth rate of DPDS. 

\begin{theorem}
\label{thm:upper_smv}
Let $x_{t+1}^{\text{\tiny{DPDS}}}$ denote the bid of DPDS policy for day $t+1$. Let DPDS parameter choice $\alpha_t = \max(\lceil \alpha t^\gamma \rceil,2)$ with $\gamma \geq1/2$ and $\alpha>0$,  and let \ref{assumption:iid}, \ref{assumption:bounded}, and \ref{assumption:lipschitz} hold. Then, for $t\geq 2$,
\begin{displaymath}
\mathbb{E}(r^{(\rho)}(x^*)-r^{(\rho)}(x_{t+1}^{\text{\tiny{DPDS}}})) \leq C_1 \sqrt{\log{t}/t}+ C_2 t^{-1/2}
\end{displaymath}
and for $T>1$,
\begin{displaymath}
 \mathcal{R}_T^{\text{\tiny{DPDS}}}(f) \leq C\sqrt{T\log{T}},
\end{displaymath}
where $C=2(C_1+C_2)$ and $C_1$ and $C_2$ are positive constants which depend on the values of $K$, $L$, $p$, $B$, $u$, $l$, $\rho$, $\alpha$, and $\gamma$.
\end{theorem}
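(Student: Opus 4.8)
The plan is to control the per-step regret $\mathbb{E}[r^{(\rho)}(x^*) - r^{(\rho)}(x_{t+1}^{\text{\tiny{DPDS}}})]$ and then sum it to obtain the cumulative bound. First I would introduce the best grid point under the \emph{true} objective, $\hat{x}_t = \argmax_{x \in \mathcal{F}_t} r^{(\rho)}(x)$, and split the instantaneous regret as
\[
r^{(\rho)}(x^*) - r^{(\rho)}(x_{t+1}^{\text{\tiny{DPDS}}}) = \underbrace{\left[r^{(\rho)}(x^*) - r^{(\rho)}(\hat{x}_t)\right]}_{\text{discretization error}} + \underbrace{\left[r^{(\rho)}(\hat{x}_t) - r^{(\rho)}(x_{t+1}^{\text{\tiny{DPDS}}})\right]}_{\text{estimation error}}.
\]
The two pieces are bounded by entirely different mechanisms: the first by the geometry of the grid together with (A3), the second by uniform concentration of the empirical objective around its mean.

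For the discretization error, round each coordinate of $x^*$ down to the nearest point of $\mathcal{D}_t$. Rounding down preserves feasibility (the $\ell_1$ norm does not increase, so the budget constraint still holds), and the resulting point $\tilde{x} \in \mathcal{F}_t$ satisfies $\|x^* - \tilde{x}\|_\infty \leq B/\alpha_t$, hence $\|x^* - \tilde{x}\|_p \leq K^{1/p} B/\alpha_t$. Lipschitz continuity (A3) then gives $r^{(\rho)}(x^*) - r^{(\rho)}(\tilde{x}) \leq L K^{1/p} B/\alpha_t$, and since $\hat{x}_t$ is optimal over $\mathcal{F}_t$ the same bound holds for $r^{(\rho)}(x^*) - r^{(\rho)}(\hat{x}_t)$. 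With $\alpha_t \geq \alpha t^\gamma$ and $\gamma \geq 1/2$ this term is $O(t^{-\gamma}) = O(t^{-1/2})$, which accounts for the $C_2 t^{-1/2}$ contribution.

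For the estimation error I would use that $x_{t+1}^{\text{\tiny{DPDS}}}$ maximizes the empirical objective $\bar{r}_t^{(\rho)}$ over the same set $\mathcal{F}_t$ on which $\hat{x}_t$ is defined, which yields the standard bound
\[
r^{(\rho)}(\hat{x}_t) - r^{(\rho)}(x_{t+1}^{\text{\tiny{DPDS}}}) \leq 2 \sup_{x \in \mathcal{F}_t} \left| \bar{r}_t^{(\rho)}(x) - r^{(\rho)}(x) \right|.
\]
Because both objectives are separable across options, this supremum is at most $\sum_k \sup_{x_k \in \mathcal{D}_t} |\bar{r}_{t,k}^{(\rho)}(x_k) - (r_k(x_k) - \rho v_k(x_k))|$, so it suffices to union-bound over the one-dimensional grid $\mathcal{D}_t$ per option rather than over the full product grid $\mathcal{F}_t$ (of size up to $(\alpha_t+1)^K$). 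Expanding the last line of the sample mean-variance expression, each per-option deviation reduces to deviations of three empirical averages of bounded iid quantities --- the mean payoff, its square (handled via $\bar{r}_{t,k}^2 - r_k^2 = (\bar{r}_{t,k}-r_k)(\bar{r}_{t,k}+r_k)$ on the bounded range from (A2)), and the empirical second moment --- together with the harmless $t/(t-1)$ factors. Hoeffding's inequality (valid since the grid is deterministic and the observations are iid by (A1)) plus a union bound over the $O(K\alpha_t)$ grid points and the few averages per point gives $\sup_{x \in \mathcal{F}_t} |\bar{r}_t^{(\rho)}(x) - r^{(\rho)}(x)| \leq c\sqrt{\log(K(\alpha_t+1)/\delta)/t}$ with probability at least $1-\delta$; since $\log \alpha_t = O(\log t)$, this is $O(\sqrt{\log t/t})$ and produces the $C_1\sqrt{\log t/t}$ term.

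Finally, converting to expectation uses that the instantaneous regret is bounded by a constant (objective values lie in a bounded range by (A2)): choosing $\delta = 1/t$ makes the failure contribution $O(1/t)$, absorbed into $C_2 t^{-1/2}$, and yields the stated per-step bound. Summing over $t$ with $\sum_{t=1}^T t^{-1/2} \leq 2\sqrt{T}$ and bounding the first few rounds (before enough data is collected) by the constant regret gives $\mathcal{R}_T^{\text{\tiny{DPDS}}}(f) \leq 2C_1\sqrt{T\log T} + 2C_2\sqrt{T} + O(1) \leq C\sqrt{T\log T}$ with $C = 2(C_1+C_2)$. The main obstacle, and the step that drives the $\sqrt{\log T}$ factor rather than a worse one, is the uniform concentration: one must exploit the separable structure to replace a union bound over the exponentially (in $K$) large product grid $\mathcal{F}_t$ by $K$ union bounds over the one-dimensional grids $\mathcal{D}_t$, and simultaneously show that the sample mean-variance correction concentrates at the same $O(\sqrt{\log t/t})$ rate as the mean despite its nonlinear dependence on the data.
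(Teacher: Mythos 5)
Your proposal is correct, and its skeleton (coordinate-wise rounding down to the grid plus Lipschitz continuity for the discretization error, then a concentration argument for the gap between the empirical maximizer and the best grid point) matches the paper's proof; the bound $LK^{1/p}B/\alpha_t$ is literally the paper's equation for the first piece. Where you genuinely diverge is the concentration step. The paper does not reduce to per-option suprema: it fixes a threshold $\delta_t$, observes that DPDS can only select a $\delta_t$-suboptimal grid point if one of two one-sided deviation events $\mathcal{E}_1,\mathcal{E}_2$ occurs, bounds each probability by McDiarmid's inequality applied to the \emph{whole} statistic $\bar{r}_t^{(\rho)}(x)$ --- whose bounded-differences constant $c_2/t$ absorbs the nonlinear $\bar{r}_{t,k}(x_k)^2$ term in one stroke --- and then union-bounds over the full product grid $\mathcal{F}_t$ of cardinality at most $2\alpha_t^K$, choosing $\delta_t = c_2\sqrt{(2(\gamma+1)K+1)\log t/t}$ so that $\alpha_t^K e^{-t\delta_t^2/(2c_2^2)} = O(t^{-1/2})$. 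Your route instead exploits separability to replace that product-grid union bound by $K$ union bounds over the one-dimensional grids $\mathcal{D}_t$, and handles the mean-variance nonlinearity by hand via $\bar{r}^2 - r^2 = (\bar{r}-r)(\bar{r}+r)$ and Hoeffding on the three bounded empirical averages. Both arguments are valid and give the same rate; yours buys a better $K$-dependence in $C_1$ (roughly $K\sqrt{\gamma\log t}$ versus the paper's $K\sqrt{K\gamma\log t}$), while the paper's McDiarmid route avoids dissecting the statistic and needs only one inequality. One claim in your closing paragraph is overstated: exploiting separability is \emph{not} what saves the $\sqrt{\log T}$ rate. A union bound over the exponentially large product grid works just as well, because the cardinality $\alpha_t^K$ enters only through $\log(\alpha_t^K) = K\gamma\log t + K \log \alpha$, i.e., it inflates the constant by a factor $\sqrt{K}$ but leaves the $\sqrt{\log t/t}$ rate untouched --- which is exactly what the paper does.
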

\begin{proof}
See Appendix~\ref{appendix:upper_proof}.
\end{proof}

Theorem~\ref{thm:upper_smv} is proved by showing that the expected payoff of $x^*_{t+1} = \argmax_{x \in \mathcal{F}_t} r^{  (\rho) } (x)$ converges to that of $x^*$ due to Lipschitz continuity, and the expected payoff of $x_{t+1}^{\text{\tiny{DPDS}}}$ converges to that of $x^*_{t+1}$ via the use of McDiarmid's inequality.

Theorem~\ref{thm:lower} shows that the regret of any policy is lower bounded by $\Omega(\sqrt{T})$. This result implies that the convergence rate of the expected payoff for any policy cannot be faster than $\Omega(1/\sqrt{t})$ because, otherwise, the regret growth would have been slower than $\Omega(\sqrt{T})$. Hence, DPDS achieves the order-optimal convergence as well as the slowest possible regret growth rate up to a logarithmic factor. 

\begin{theorem}
\label{thm:lower}
Consider the case where $K=1$, $B=1$, $\rho=0$. For any bidding policy $\mu$, there exists a distribution $f$ satisfying assumptions \ref{assumption:iid}, \ref{assumption:bounded}, and \ref{assumption:lipschitz} such that 
\begin{displaymath}
R_T^\mu(f) \geq \frac{1}{16\sqrt{5}}\sqrt{T}.
\end{displaymath}
\end{theorem}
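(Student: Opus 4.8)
The plan is to prove this minimax lower bound by Le~Cam's two-point method: I would exhibit two price distributions that any bidding policy cannot reliably distinguish from the observed prices, yet whose optimal bids sit at opposite ends of the feasible interval, so that some policy must suffer $\Omega(\sqrt{T})$ regret on at least one of them.

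\emph{Construction.} With $K=1$, $B=1$ the feasible set is $\mathcal{F}=[0,1]$. For a parameter $\epsilon\in(0,1)$ to be fixed later, let $\lambda_t$ be uniform on $[0,1]$ under both hypotheses, and set $\pi_t=\lambda_t+Z_t$, where $Z_t\in\{-1,+1\}$ is independent of $\lambda_t$ with $\Pr(Z_t=1)=(1\pm\epsilon)/2$, i.e.\ $\mathbb{E}[Z_t]=\pm\epsilon$ under $f_\pm$. Then $\nu_t=\pi_t-\lambda_t=Z_t$, so the per-unit payoff $Z_t\mathds{1}\{x\ge\lambda_t\}$ lies in $[-1,1]$, giving (A2) with $l=-1,u=1$; the draws are i.i.d., giving (A1); and since $Z_t$ is independent of the uniform $\lambda_t$,
\[
r_\pm(x)=\mathbb{E}_\pm\!\big[Z\,\mathds{1}\{x\ge\lambda\}\big]=\mathbb{E}_\pm[Z]\,\Pr(\lambda\le x)=\pm\epsilon\,x,\qquad x\in[0,1],
\]
which is linear, hence Lipschitz with constant $\epsilon\le1$, giving (A3). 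Thus both instances satisfy the hypotheses of Theorem~\ref{thm:upper_smv}. Under $f_+$ the optimum is $x^*=1$ and under $f_-$ it is $x^*=0$, so the single-round regret of any bid $x$ is $\epsilon(1-x)$ under $f_+$ and $\epsilon x$ under $f_-$.

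\emph{Reduction and information bound.} Writing $R_\pm$ for the regret of the given policy under $f_\pm$, I use $\max(R_+,R_-)\ge\tfrac12(R_++R_-)$ together with
\[
R_++R_-=\epsilon\sum_{t=1}^T\Big(1-\big(\mathbb{E}_+[x_t]-\mathbb{E}_-[x_t]\big)\Big).
\]
Because each $x_t$ is a (possibly randomized) function of the history $I_{t-1}$ with values in $[0,1]$, the difference $\mathbb{E}_+[x_t]-\mathbb{E}_-[x_t]$ is bounded by the total variation between the laws of $I_{t-1}$; and since the $x_i$ are common deterministic functions of the observed prices, Pinsker's inequality and the tensorization of KL divergence over the $t-1$ i.i.d.\ observations give
\[
\mathbb{E}_+[x_t]-\mathbb{E}_-[x_t]\le\big\|P_+^{\,t-1}-P_-^{\,t-1}\big\|_{\mathrm{TV}}\le\sqrt{\tfrac12\,(t-1)\,\mathrm{KL}_1},
\]
where $\mathrm{KL}_1=\epsilon\ln\frac{1+\epsilon}{1-\epsilon}$ is the per-round KL divergence (the shared $\lambda$-marginal contributes nothing, so only the two $Z$-Bernoullis matter). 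Substituting a bound of the form $\mathrm{KL}_1\le c\,\epsilon^2$, valid for $\epsilon$ bounded away from $1$, yields $R_++R_-\ge\epsilon\sum_{t=1}^T\big(1-\epsilon\sqrt{c(t-1)/2}\big)$. Choosing $\epsilon=\Theta(1/\sqrt{T})$ keeps each summand bounded below by a positive constant for every $t\le T$, so the sum is $\Omega(T)$ and $\max(R_+,R_-)=\Omega(\sqrt{T})$; tracking the constants, i.e.\ fixing $c$, the admissible range of $\epsilon$, and the value of $\epsilon$ that balances the linear gain $\epsilon T$ against the $\epsilon^2T^{3/2}$ penalty, produces the explicit factor $\tfrac1{16\sqrt5}$.

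\emph{Main obstacle.} The delicate part is the information-theoretic core: controlling, uniformly in $t$, how much the policy's bids can differ across the two hypotheses (the term $\mathbb{E}_+[x_t]-\mathbb{E}_-[x_t]$) using only the observed price histories, and then bounding $\mathrm{KL}_1$ and choosing $\epsilon$ tightly enough to extract the stated constant rather than a generic $\Omega(\sqrt T)$. A secondary point requiring care is that the two-point instance must genuinely satisfy the Lipschitz and boundedness assumptions, so that the bound applies within the very class for which the matching upper bound of Theorem~\ref{thm:upper_smv} holds; the linear payoff $r_\pm(x)=\pm\epsilon x$ arising from the uniform $\lambda_t$ is precisely what secures this.
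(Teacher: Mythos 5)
Your proposal is correct, and it reaches the stated bound with the same information-theoretic engine as the paper (two confusable i.i.d.\ instances, KL tensorization across rounds, Pinsker's inequality, and $\epsilon \asymp T^{-1/2}$), but the construction and the reduction are genuinely different. The paper draws $\lambda_t$ uniformly from the narrow interval $[(1-\epsilon)/2,(1+\epsilon)/2]$ and takes $\pi_t \sim \mathrm{Bernoulli}(1/2\pm\epsilon)$, so the per-round regret is at least $\epsilon/2$ times the indicator of the ``wrong'' clearing decision; because these indicators are counted under two different measures, the paper introduces a third, reference distribution $f_0$ (with $\bar{\pi}=1/2$) under which the clearing and non-clearing counts sum exactly to $T$, and applies the change-of-measure/Pinsker bound twice, from $f_0$ to each alternative, using the full-horizon KL. Your construction instead makes the expected payoff linear in the bid, $r_\pm(x)=\pm\epsilon x$, so the per-round regrets $\epsilon(1-x_t)$ and $\epsilon x_t$ under the two alternatives combine into $\epsilon\left(1-(\mathbb{E}_+[x_t]-\mathbb{E}_-[x_t])\right)$, and a direct two-point argument suffices: no reference measure is needed, and you can use the prefix KL $(t-1)\mathrm{KL}_1$ rather than the paper's full-horizon $T\cdot \mathrm{KL}_1$, which is slightly tighter. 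Tracking constants, your route actually gives a better constant: with $\epsilon=a/\sqrt{T}$, $a\approx 0.43$, and $\mathrm{KL}_1=\epsilon\ln\frac{1+\epsilon}{1-\epsilon}\le \tfrac{8}{3}\epsilon^2$ (valid for $\epsilon\le 1/2$, hence for all $T\ge 1$), one gets $\max(R_+,R_-)\gtrsim \sqrt{T}/9$, comfortably above the required $\sqrt{T}/(16\sqrt{5})$; the paper's construction, in exchange, has a Lipschitz constant $L=3/2$ that does not shrink with $T$, whereas yours is $L=\epsilon$ --- both satisfy (A3), which fixes no particular $L$, so this difference is cosmetic. Two loose ends you should tighten: first, the constant $1/(16\sqrt{5})$ is asserted rather than derived, so carry out the optimization just sketched (the slack is large, so this is routine); second, the phrase ``the $x_i$ are common deterministic functions of the observed prices'' should be replaced by the statement that the conditional law of $x_i$ given $I_{i-1}$ is identical under both hypotheses (this is what makes the bid factors cancel in the KL chain rule, and it covers randomized policies, exactly as the paper handles $f_i(x_t^\mu|I_{t-1})$).
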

\begin{proof}
See Appendix~\ref{appendix:lower_proof}.
\end{proof}

The proof of Theorem~\ref{thm:lower} is derived by showing that, every time the bid is cleared, an incremental regret greater than $T^{-1/2}/(4\sqrt{5})$ is incurred under a distribution; otherwise, it is incurred under another distribution. However, to distinguish
between these two distributions, one needs $\Omega(T)$ samples which results in a regret lower bound given in Theorem~\ref{thm:lower}.

\section{Empirical Study}

\subsection{Setup and Data}

For the empirical study, we consider virtual bids on zonal nodes for two different ISOs: NYISO and PJM. We use historical DA and RT price data from the beginning of 2006 until the end of 2016 of NYISO and PJM zones. This data set is available for all 11 zones of NYISO and for 19 zones of PJM. Since the price varies in time and location, there are $N \times 24$ different trading options every day where $N=11$ for NYISO and $N=19$ for PJM. The prices are per unit (MWh) prices. We consider virtual demand and virtual supply bids simultaneously for all options by using the model presented in Sec.~\ref{sec:virtual_trading_model}. This model requires the knowledge of an upper bound $u_\lambda$ and a lower bound $l_\lambda$ for DA price. We choose $u_\lambda$ and $l_\lambda$ accordingly for each ISO by looking at the range of the historical DA prices in that markets. We set $u_\lambda=1000$ and $l_\lambda=0$ for NYISO; and $u_\lambda=1050$ and $l_\lambda=-30$ for PJM. Consequently, total number of options is $K = 2 \times N \times 24$. 

The DA market for day $t$ closes early in the morning on day $t-1$ for both NYISO and PJM. Hence, all of the RT prices of day $t-1$ cannot be observed before the bid submission for day $t$. Therefore, the most recent observation used for any algorithm was from day $t-2$ to determine the bid for day $t$. 

\subsection{Benchmark Methods}

We compare DPDS with three algorithms. One is UCBID-GR inspired by UCBID \cite{Weedetal:16}. On each day, UCBID-GR sorts all trading options according to their profitabilities, \textit{i.e.}, their historical average DA-RT price spreads. Then, starting from the most profitable option, it sets the bid for an option equal to its historical average RT price\footnote{The bid is set to zero if historical average RT price is negative because bidding less than or equal to zero implies not bidding on that option.} until there isn't any sufficient budget left. 

The second algorithm is a variant of Kiefer-Wolfowitz stochastic approximation method, herein referred to as SA. SA approximates the gradient of the payoff function by using the current observation and updates the bid of each $k$ as follows; 
\begin{displaymath}
x_{t+1,k} = x_{t,k} + a_t (\pi_{t-1,k}-\lambda_{t-1,k}) \hat{\nabla}_{t,k}
\end{displaymath}
where
\begin{displaymath}
\hat{\nabla}_{t,k} = \frac{\mathds{1}{\{ x_{t,k}+c_t \geq \lambda_{t-1,k}\}}-\mathds{1}{\{ x_{t,k}-c_{t,k} \geq \lambda_{t-1,k}\}} }{c_t}.
\end{displaymath}
Then, $x_{t+1}$ is projected to the feasible set $\mathcal{F}$. The step size $a_t$ and $c_t$ of SA were determined by searching for values that provide relatively better payoff and were set as $20000/(t-1)$ and $2000/(t-1)^{0.25}$, respectively.

The last algorithm is SVM-GR, which is inspired by the use of support vector machines (SVM) by Tang et al.\cite{Tangetal:17} to determine if a demand or a supply bid is profitable for an option, \textit{i.e.}, if the price spread is positive or negative. Due to possible correlation of a particular option's price spread on any given day with the price spreads of that and also of other options that are observed recently, for day $t$, the input of SVM for each option is set as the price spreads of all options from day $t-7$ to day $t-2$. To test SVM-GR algorithm at a particular year, for each option, the data from the previous year is used to train SVM and to determine the average profit, \textit{i.e.}, average price spread, and the bid level that will be accepted with 95\% confidence in the event that a demand or a supply bid is profitable. For the test year, on each day, SVM-GR first determines if a demand or a supply bid is profitable for each option. Then, SVM-GR sorts all options according to their average profits, and, starting from the most profitable option, it sets the bid of an option equal to the bid level with 95\% confidence of acceptance until there isn't any sufficient budget left.

The DPDS algorithm was tested for $\rho$ values of 0 and 0.002 to evaluate the performance under a sum of mean-variance objective instance as well as for $\rho=0$ (the risk-neutral objective). To differentiate between these two different choices of $\rho$, let DPDS($\rho$) denote the DPDS algorithm with associated $\rho$ value. The DPDS algorithm parameter $\alpha_t$ was set to be $t-1$. 

\subsection{Empirical Results/Analysis}

\begin{figure}[!t]
\centering
\includegraphics[width=1\columnwidth]{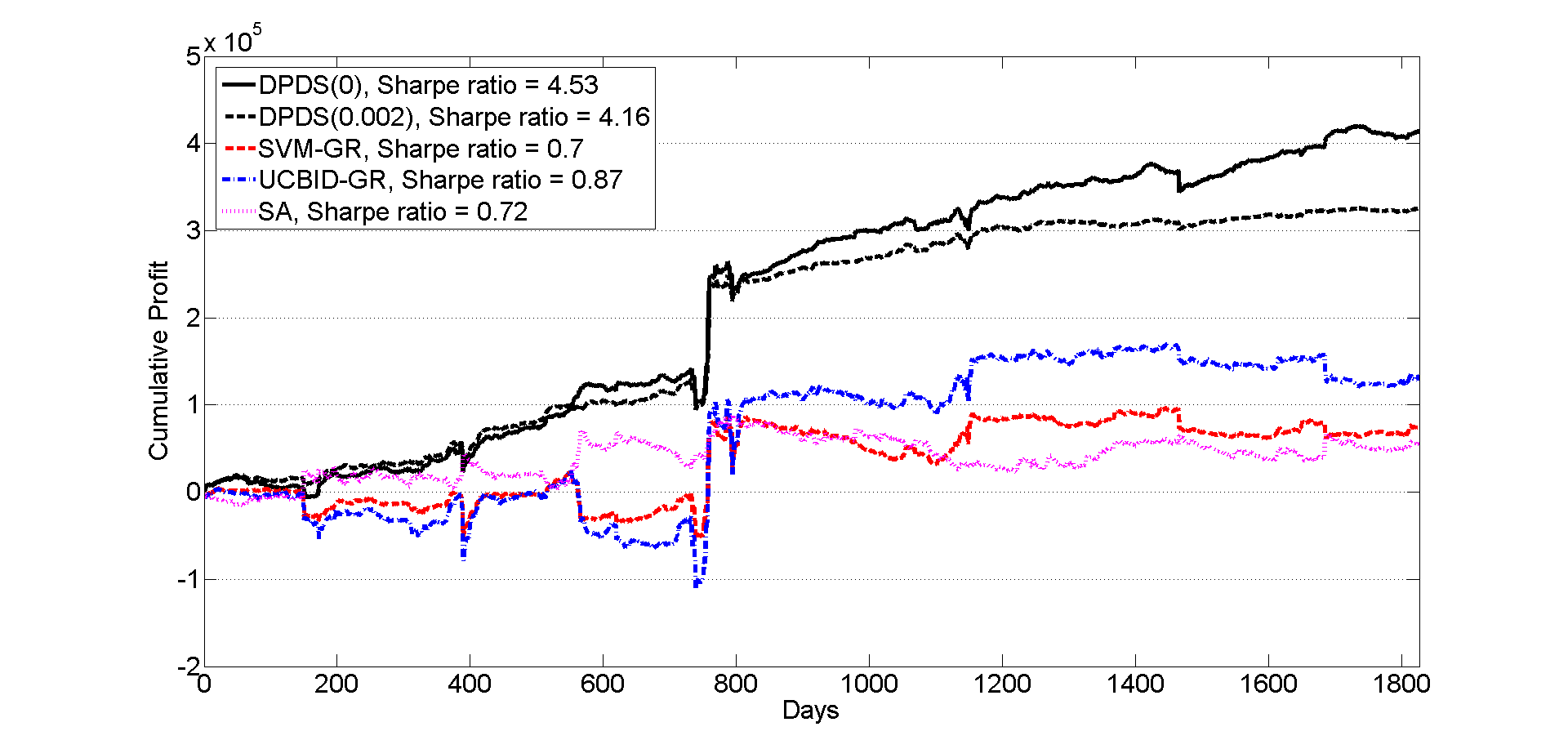}
\caption{Cumulative profit trajectory from 2012 to 2016 in NYISO  for B=\$250,000 after an initial training with 2011 data.}
\label{fig:trajectory}
\end{figure}

\begin{figure*}[!t]
\centering
\subfloat[Annual profit versus year]{\includegraphics[width=1\columnwidth]{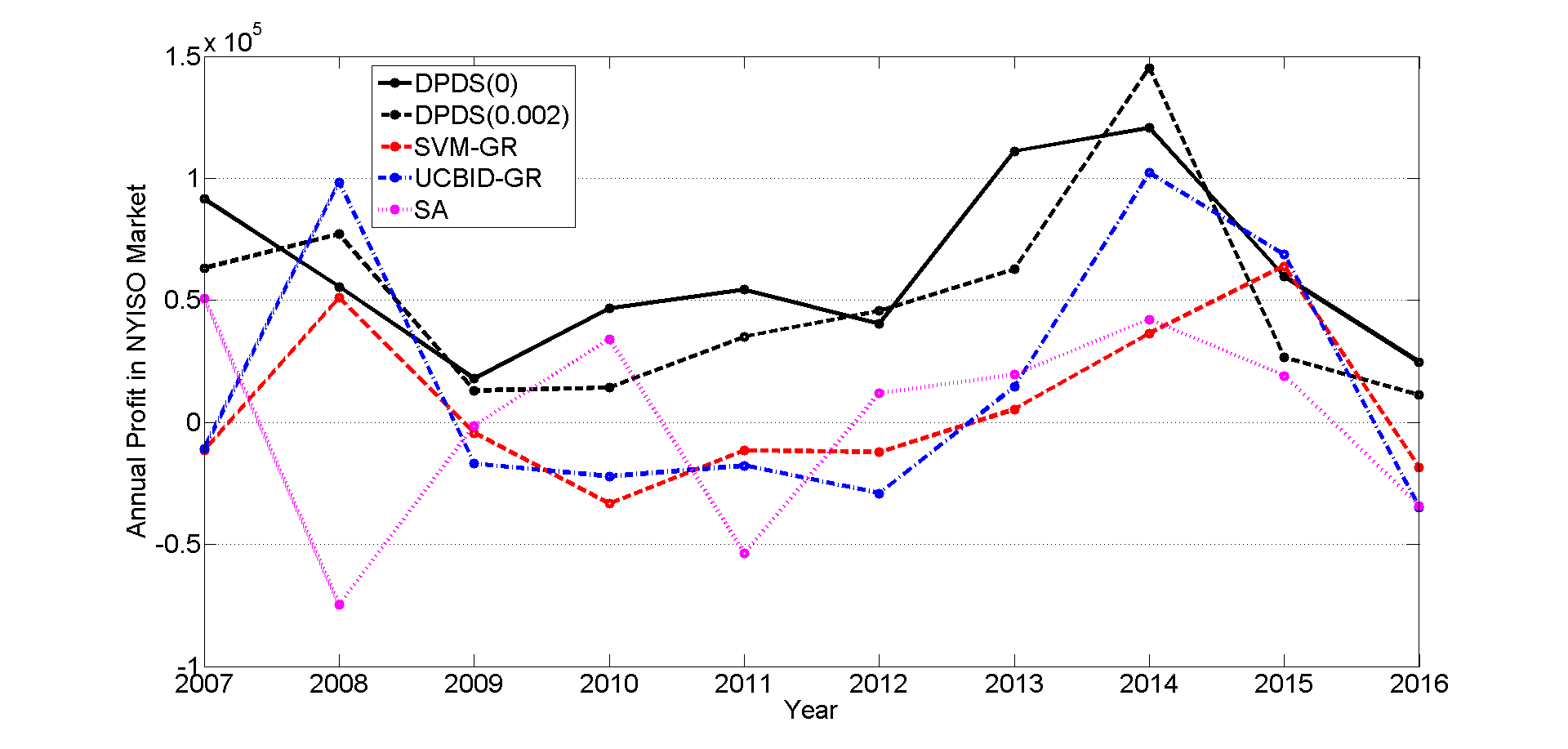}%
\label{fig:nyiso_10years_profit}}
\hfil
\subfloat[Annual Sharpe ratio versus year]{\includegraphics[width=1\columnwidth]{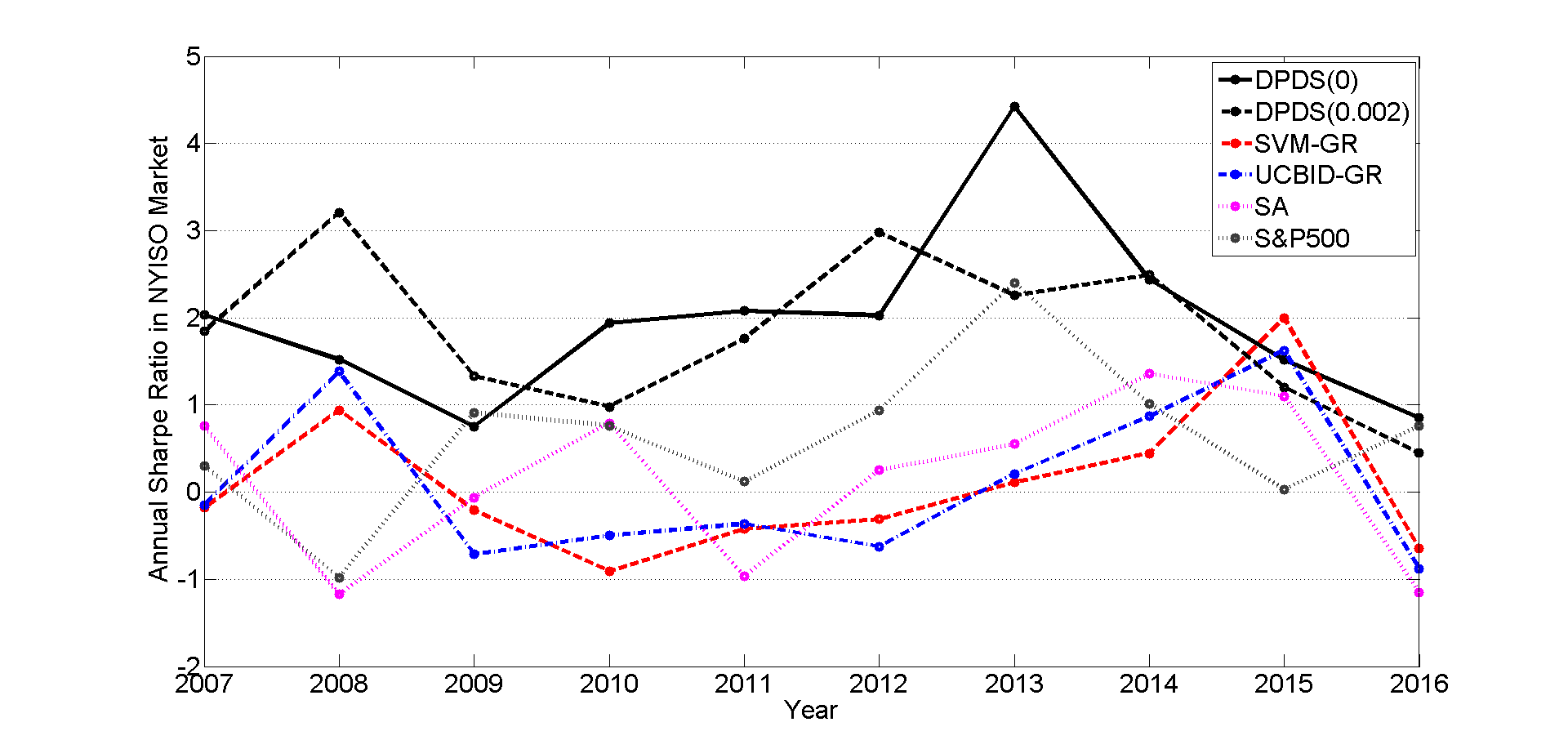}%
\label{fig:nysio_10years_sharpe}}
\caption{Annual performance in NYISO for $B=\$250,000$ (For each year, an initial training with previous year's data was performed.)}
\label{fig:nyiso_10years}
\end{figure*}

\begin{figure*}[!t]
\centering
\subfloat[Annual profit versus year]{\includegraphics[width=1\columnwidth]{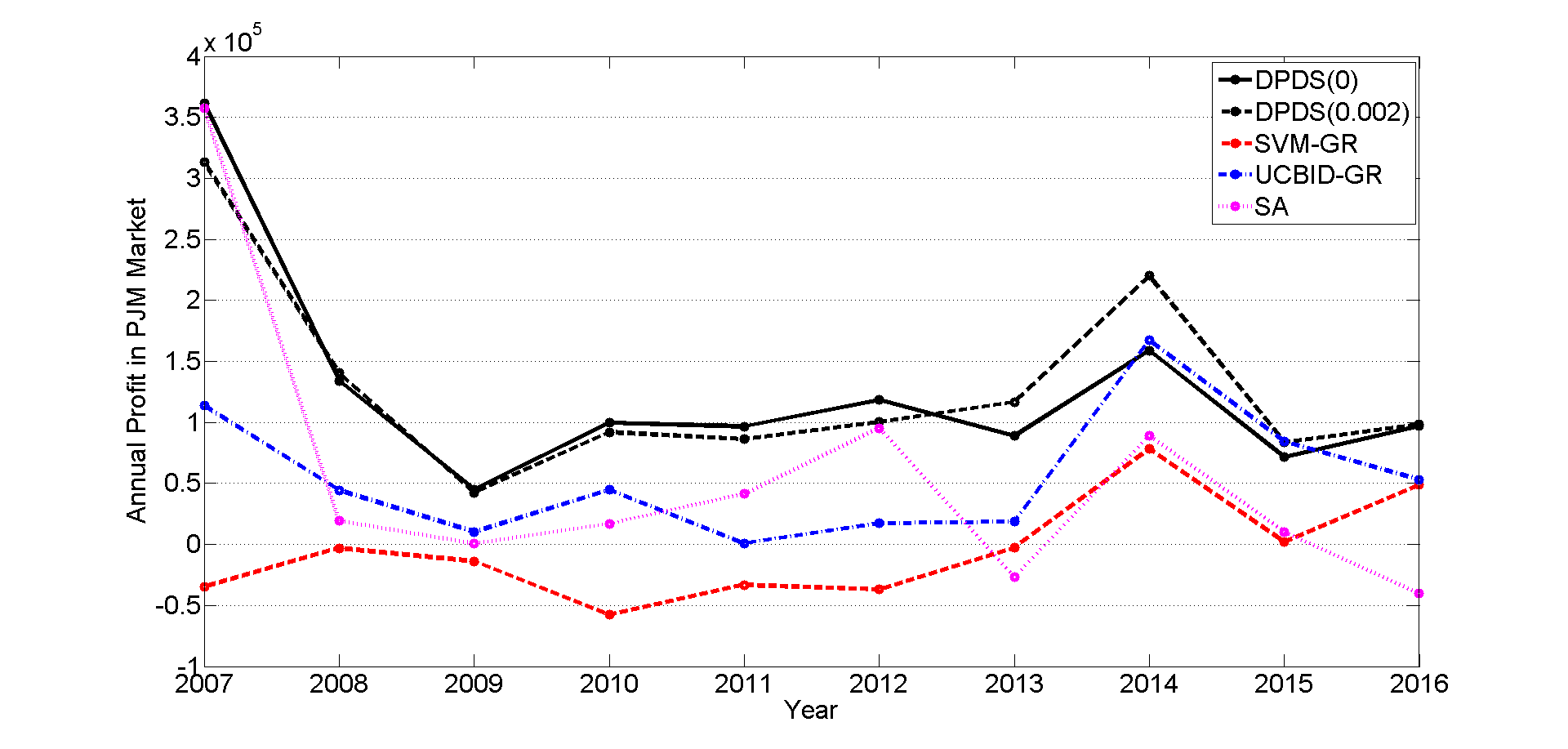}%
\label{fig_pjm_10years_profit}}
\hfil
\subfloat[Annual Sharpe ratio versus year]{\includegraphics[width=1\columnwidth]{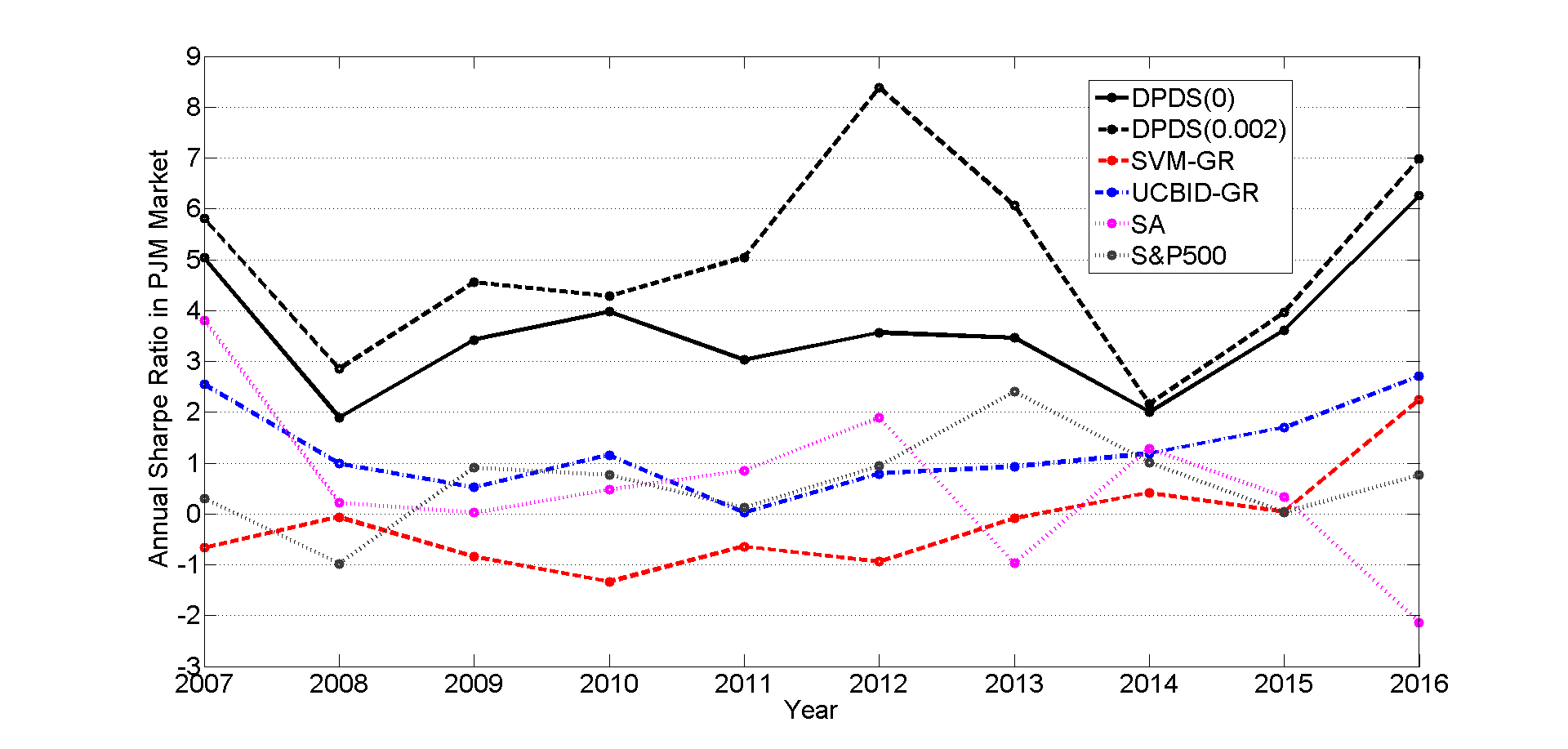}%
\label{fig_pjm_10years_sharpe}}
\caption{Annual performance in PJM for $B=\$250,000$ (For each year, an initial training with previous year's data was performed.)}
\label{fig:pjm_10years}
\end{figure*}

\begin{figure*}[!t]
\centering
\subfloat[2016 profit versus budget level]{\includegraphics[width=1\columnwidth]{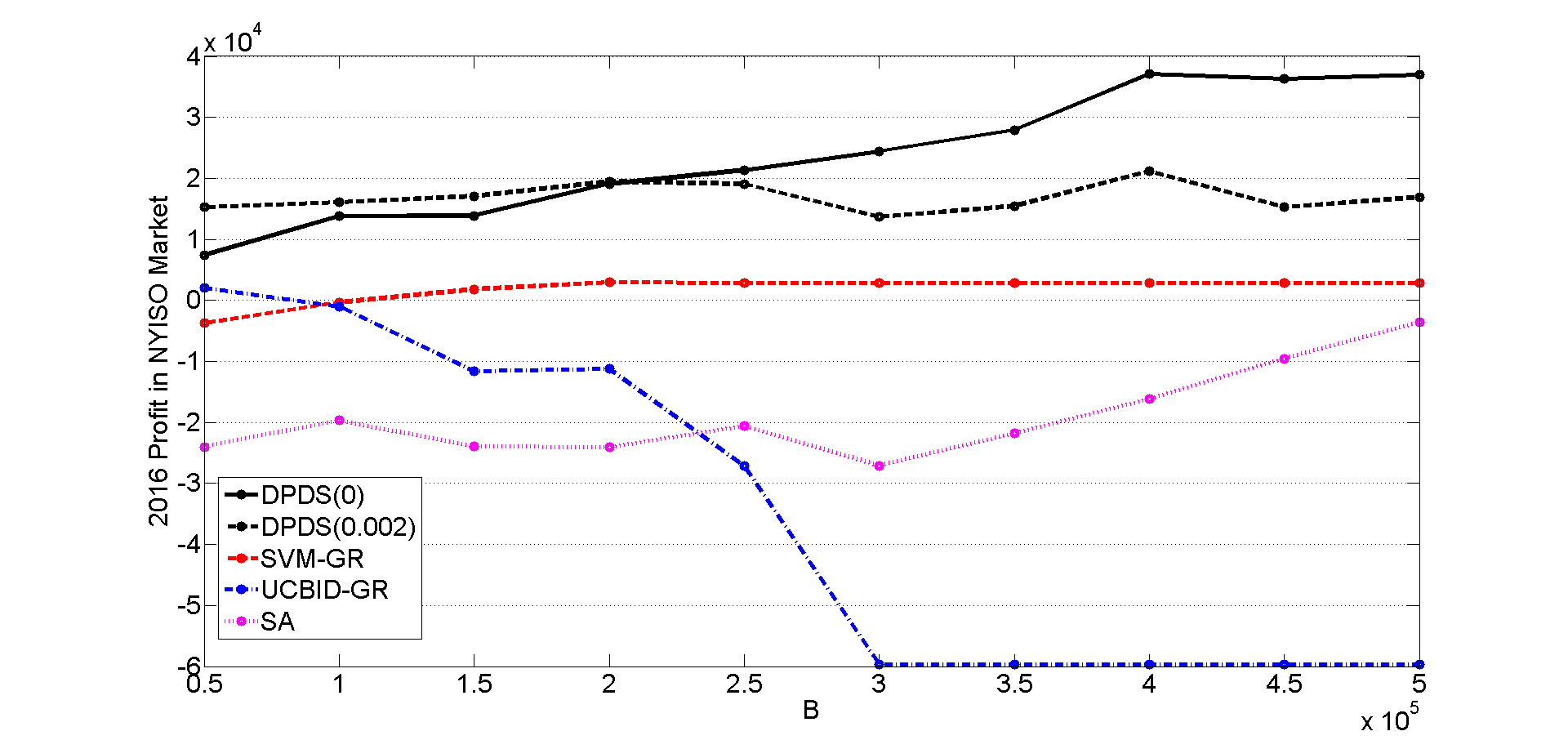}%
\label{fig:nyiso_budget_profit}}
\hfil
\subfloat[2016 Sharpe ratio versus budget level]{\includegraphics[width=1\columnwidth]{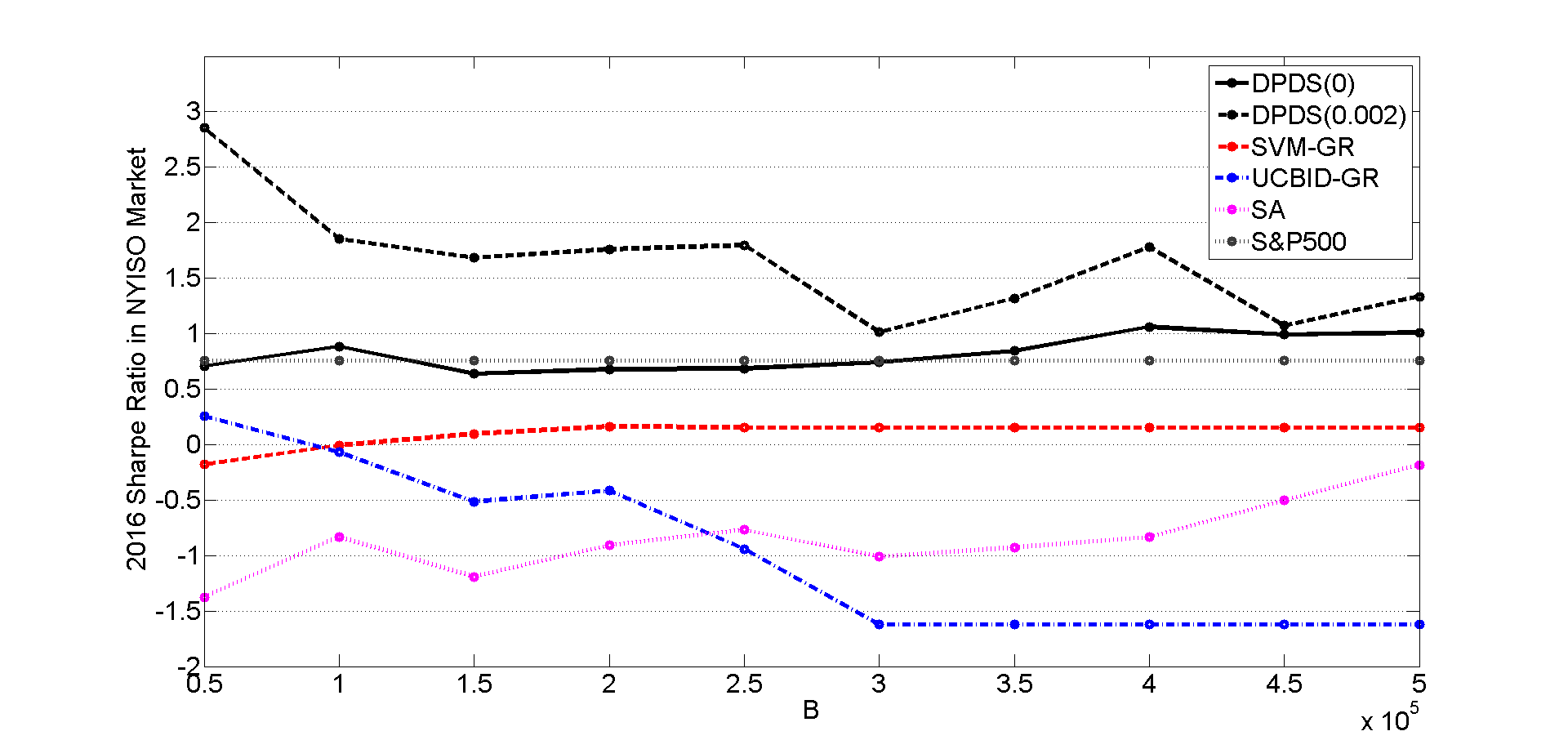}%
\label{fig:nysio_budget_sharpe}}
\caption{2016 Performance in NYISO under different budget levels after an initial training with 2014 and 2015 data}
\label{fig:nyiso_budget}
\end{figure*}

For each algorithm, the trajectory of cumulative profit that was obtained in NYISO market with a daily budget of B=\$250,000 from the beginning of 2012 until the end of 2016 is given in Fig.~\ref{fig:trajectory}. Since the data of 2011 was required to train SVM-GR, other algorithms were also trained starting from the beginning of 2011. First, we observed that DPDS significantly outperformed other algorithms in terms of Sharpe ratio\footnote{In this paper, Sharpe ratio is calculated as $\sqrt{T}\frac{\bar{r}_T}{\sqrt{\frac{1}{T-1}\sum_{t=1}^T(r_t-\bar{r}_T)^2}}$ where $\bar{r}_T = \frac{1}{T}\sum_{t=1}^Tr_t$, $T$ is the number of trading days during the time period under consideration, and $r_t$ is the percentage return of day $t$, which is equal to the profit of day $t$ for virtual trading with fixed daily budget.}, including the S\&P 500 Sharpe ratio\footnote{To calculate this, S\&P 500 adjusted closing price data for the time period under consideration is used. This data is obtained from Yahoo finance.} of 2.10 for the same period. See the legend of Fig.~\ref{fig:trajectory}. This showed the significant value of participating in virtual trading in terms of profitability and risk trade-off.

The cumulative profit of DPFS, as shown in Fig.~\ref{fig:trajectory}, outperformed all other algorithms with DPDS(0), which generated the highest profit. Comparing DPDS(0) and DPDS(0.002) with the latter taking into account the variance of the return, we observed from Fig.~\ref{fig:trajectory} that DPDS(0.002) generated a smoother return trajectory by avoiding more risky bids and generating less profit. We observed that, even though other algorithms were profitable; the increase in their cumulative profits was not consistent. Particularly, for UCBID-GR and SVM-GR, most of their profit resulted from a jump occurred in January 2014 due to a polar vortex \cite{NYISO:15}, which didn't affect SA because of SA's incremental bid update via a local search.

To gain insights from the performance of these algorithms on a yearly basis, annual performances for 10 consecutive years in NYISO market and in PJM market are provided in Fig.~\ref{fig:nyiso_10years} and in Fig.~\ref{fig:pjm_10years}, respectively. To evaluate the performance of a given year, SVM-GR used the data from the previous year for training. Hence, all other algorithms were trained for each year starting from the beginning of the previous year. Fig.~\ref{fig:nyiso_10years}(a) illustrates the total profit that is obtained each year in NYISO. We observed that DPDS outperformed all other algorithms almost every year and consistently achieved a positive profit each year for both $\rho$ values; whereas, all other algorithms incurred losses frequently. Due to the increasing trend in profits from 2009 to 2014, we couldn't conclude that there was a decrease in profits over the years as a result of price convergence despite the decrease in the last two years. In NYISO, 2016 seemed to be the worst year in terms of profitability in general. Annual Sharpe ratios of all algorithms along with that of S\&P 500 are illustrated in Fig.~\ref{fig:nyiso_10years}(b) for NYISO. We observed that DPDS outperformed other algorithms and S\&P 500 also in terms of Sharpe ratio.

Similarly, total profit and Sharpe ratios that were achieved each year in the PJM market are provided in Fig.~\ref{fig:pjm_10years}(a) and Fig.~\ref{fig:pjm_10years}(b), respectively. In PJM, we observed that the trends in terms of both profit and Sharpe ratio were similar to the ones observed in NYISO. In general, we observed that the profit margins of all algorithms except SVM-GR were much higher in PJM compared with NYISO. Similar to the case in NYISO, in PJM, DPDS achieved higher Sharpe ratios than any other algorithm and than S\&P 500. However, in PJM, the performance gap between DPDS and others was much more significant. Furthermore, the Sharpe ratios were in general higher for all algorithms (except SVM-GR) in PJM compared with NYISO counterparts. In PJM, especially DPDS exhibited very high Sharpe ratios, \textit{i.e.}, between 2 and 9, which were consistently higher for $\rho=0.002$ (around 5 on average) compared with $\rho=0$ (around 3.6 on average). 

To illustrate how algorithms performed under different budget constraints, we examined the NYISO market in 2016, the year with the lowest levels of profit and Sharpe ratio (see Fig.~\ref{fig:nyiso_10years}). Total profit and Sharpe ratio for this period under different budget levels are illustrated in Fig~\ref{fig:nyiso_budget}(a) and Fig~\ref{fig:nyiso_budget}(b), respectively. Here, all algorithms were trained initially with the data from the previous two years rather than only previous year. When we increased the data used for initial training to two years, we observed that algorithms performed significantly better in terms of both profit and Sharpe ratio in general. We observed that DPDS outperformed other algorithms at all budget levels, and profit of DPDS(0) increased with increasing budget; whereas the profit of DPDS(0.002) stayed in the same range without an increasing trend. This was reasonable because DPDS(0) optimized profit and should exhibit a profit increase for higher budgets; whereas DPDS(0.002) optimized a linear combination of profit and variance term, which did not indicate a profit increase. SVM-GR also illustrated an increasing trend in profit, but this trend was much smaller compared with the trend of DPDS(0). For both SA and UCBID-GR, big losses were observed almost at all budget levels. In Fig.~\ref{fig:nyiso_budget}(b), we observed that DPDS achieved higher Sharpe ratios than other algorithms for both $\rho$ values, and the Sharpe ratio of DPDS(0) stayed around the Sharpe ratio of S\&P 500; whereas DPDS(0.002) achieved higher Sharpe ratio than DPDS(0) consistently. So, even though the profit levels of 2016 were not as high as the ones that were obtained in previous years, there were bidding strategies that achieved better Sharpe ratio than that of S\&P 500.

\section{Conclusion}

In many wholesale electricity markets, virtual trading is allowed across different locations in the grid and across different hours of the day. For example, PJM allows virtual demand and virtual supply bids on more than 1000 locations. Hence, rather than considering to bid on a specific location (or a specific zone where the price is averaged over many locations), we show that a well-designed online learning algorithm can assess many virtual bidding options simultaneously and select the most profitable ones among them efficiently. Despite that the objective function involved being non-convex and the ERM problem being NP-hard, the proposed polynomial-time algorithm provides an approximate solution that converges to the optimal solution with the fastest rate of convergence. 

More significant, perhaps, is that the proposed algorithm, when tested with multi-year historical data, showed better Sharpe ratio against competitors, including the S\&P 500 index. Such historical data, obviously, do not confirm with the assumption made for the convergence result. This suggests a level of robustness of the proposed algorithm.

There are several directions that the proposed approach can be generalized. The algorithm presented here optimizes the bid values (willingness to pay) for options but not the quantities (number of MWhs). Even though the problem formulation allows optimization of multiple copies of the same option as separate options, this is not efficient in terms of computational complexity. An extension to include quantity as a decision variable should further improve the performance. It would be also interesting to study other risk-averse objectives. For example, including the bid values as well as bid quantities as decision variables to the risk-constrained problem formulation in \cite{Lietal:15} can be considered.

\section*{Acknowledgment}
The authors would like to thank Professor Robert Kleinberg for the insightful discussion.

\ifCLASSOPTIONcaptionsoff
  \newpage
\fi

\appendices

\section{NP-Hardness of the ERM Approach}
\label{appendix:np_hardness}

Due to the piece-wise constant structure, choosing $x_k =\lambda^{(t,k)}_{i}$ for some $i$ contributes the same amount to the overall payoff as choosing any $x_k \in \left[\lambda^{(t,k)}_{i},\lambda^{(t,k)}_{i+1}\right)$. However, choosing $x_k =\lambda^{(t,k)}_{i}$ utilizes a smaller portion of the budget. Hence, an optimal solution to (\ref{eq:avg_opt}) can be obtained by solving the following integer linear program:

\begin{align}
\label{P:ILP}
& \underset{\{z_k\}_{k=1}^K}{\text{maximize}}
& &\sum_{k=1}^K \left(r^{(t,k)}\right)^\intercal z_k \notag \\
& \text{subject to}
& & \sum_{k=1}^K \left(\lambda^{(t,k)}\right)^\intercal z_k \leq B, \\
&&& \|z_{k}\|_1 \leq 1, \qquad  \forall k,\notag \\
&&& z_{k,i} \in \{0,1\}, \qquad \forall (k,i). \notag
\end{align}
where the bid value $x_k = \left(\lambda^{(t,k)}\right)^\intercal z_k$ for node $k$. 

Observe that (\ref{P:ILP}) is a multiple choice knapsack problem (MCKP), a generalization of the 0-1 knapsack. The MCKP problem in (\ref{P:ILP}), unfortunately, is NP-hard\cite{Kellereretal:04}. Had we a polynomial-time algorithm that finds an optimal solution $x \in \mathcal{F}$ to (\ref{eq:avg_opt}), we would have obtained the solution of (\ref{P:ILP}) in polynomial-time by setting $z_{k,i}=1$ where $i= \max_{i: \lambda_{i}^{(t,k)} \leq x_k} i$ for each $k$. By contradiction, the ERM problem (\ref{eq:avg_opt}) is also NP-hard. 

\section{The Bellman Equation and Computational Complexity of DPDS}
\label{appendix:comp_complexity}
Recall that $V_n(b)$ is defined as the maximum payoff one can collect in state $b$ over the remaining $n$ stages. Assuming that $V_0(b)=0$ for any $b$, the Bellman equation can be written as
\begin{equation}
\label{eq:recursion}
V_{K-k+1}(b)= \max_{x_k \in A_{t,b}} \left(\bar{r}_{t,k}(x_k)+V_{K-k}(b-x_k)\right),
\end{equation}
which can be solved via backward induction starting from $k=K$ and proceeding toward $k=1$. For each $k$,  $V_{K-k+1}(b)$ is calculated for all $b \in \mathcal{D}_t$. Since the computation of $V_{K-k+1}(b)$ requires at most $\alpha_t+1$ comparison for any fixed value of $k \in \{1,...,K\}$ and $b \in \mathcal{D}_t$, it has a computational complexity on the order of $K\alpha_t^2$ given the average payoff values $\bar{r}_{t,k}(x_k)$ for all $x_k \in \mathcal{D}_t$ and $k \in \{1,...,K\}$. For each $k \in \{1,...,K\}$, computation of $\bar{r}_{t,k}(x_n)$ for all $x_k \in \mathcal{D}_t$ introduces an additional computational complexity of at most on the order of $t$ which can be achieved by updating $\left(\lambda^{(t,k)},r^{(t,k)}\right)$ from $\left(\lambda^{(t-1,k)},r^{(t-1,k)}\right)$ recursively by using observations  $\lambda_t$ and $\pi_t$ (see the algorithm pseudo-code provided in Fig.~\ref{alg:dpds}). Hence, total computational complexity of DPDS is $O(K\max(t,\alpha_t^2))$ at each day $t$.


\begin{figure}
	\begin{boxedalgorithmic}
		\State \textbf{Initialization:} Set $x_1=0$ and $\left(\lambda^{(k)},r^{(k)},v^{(k)}\right)=(0,0)$ $\forall$ $k \in \{1,...,K\}$; 
		
		\For{$t=1$ to $T$}
		\State Bid $x_t$;
		\State At the end of period $t$, observe $(\lambda_t,\pi_t)$; 
		\For{$k=1$ to $K$}
		\State Set $i_{k} = \max_{i:\lambda^{(k)}_{i}<\lambda_{t,k}}i$; 
		\State $\lambda^{(k)} = \left[\lambda^{(k)}_{1:i_{k}},\lambda_{t,k},\lambda^{(k)}_{i_{k}+1:t}\right]^\intercal$;
		\State  $r^{(k)} = \left[\frac{t-1}{t}r^{(k)}_{1:i_{k}},\frac{t-1}{t}r^{(k)}_{i_{k}:t}+\frac{1}{t}(\pi_{t,k}-\lambda_{t,k})\right]^\intercal$;
		\State  $v^{(k)} = \left[\frac{t-1}{t}v^{(k)}_{1:i_{k}},\frac{t-1}{t}v^{(k)}_{i_{k}:t}+\frac{1}{t}(\pi_{t,k}-\lambda_{t,k})^2\right]^\intercal$;
		\EndFor
		\State Set $V_0(jB/\alpha_t) =0$ $\forall$ $j \in \{0,1,...,\alpha_t\}$;
		\State Set $V_n(0) = 0$ $\forall$ $n \in \{1,...,K\}$; 
		\State Set $w_n(0)=0$ $\forall$ $n \in \{1,...,K\}$; 
		\For{$n=1$ to $K$}
		\State $l = 2$, $d=0$, and $j' = \alpha_t$;
		\For{$j=1$ to $\alpha_t$}
		\While{$d = 0$}
		\If{$\lambda^{(n)}_{l} > jB/\alpha_t$}
		\State $\hat{r}_{t,n}\left(jB/\alpha_t\right) = r^{(n)}_{l-1} + \rho \frac{t}{t-1}[(r^{(n)}_{l-1})^2 - v^{(n)}_{l-1}]$;
		\State break;
		\Else
		\If{$l=t+1$}
		\State $\hat{r}_{t,n}\left(jB/\alpha_t\right) = r^{(n)}_{l} + \rho \frac{t}{t-1}[(r^{(n)}_{l})^2 - v^{(n)}_{l}]$;
		\State $d = 1$ and $j' = j$;
		\State break;
		\Else
		\State $l = l+1$;
		\EndIf
		\EndIf
		\EndWhile
		\State $V_{n}(jB/\alpha_t) = V_{n-1}(jB/\alpha_t)$;
		\State $w_{n}(jB/\alpha_t) = 0$; 
		\For{$i=1$ to $\min\{j,j'\}$}
		\If{$V_{n}(jB/\alpha_t) < V_{n-1}((j-i)B/\alpha_t)+\hat{r}_{t,n}(iB/\alpha_t) $}
		\State $V_{n}(jB/\alpha_t) = V_{n-1}((j-i)B/\alpha_t)+\hat{r}_{t,n}(iB/\alpha_t)$; 
		\State $w_{n}(jB/\alpha_t) = iB/\alpha_t$;
		\EndIf 			
		\EndFor		
		\EndFor
		\EndFor
		\State $B_r=B$;
		\For{$k=K$ to $1$}
		\State $x_{t+1,k} = w_{k}(B_r)$;
		\State $B_r = B_r - x_{t+1,k}$;	
		\EndFor
		\EndFor
	\end{boxedalgorithmic}
	\caption{DPDS algorithm pseudo-code. Here, for a vector $y$, $y_{m:n}=(y_m,y_{m+1},...,y_n)$ denotes the sequence of entries from $m$ to $n$.} 
	\label{alg:dpds}
\end{figure}

\section{Proof of Theorem \ref{thm:upper_smv}}
\label{appendix:upper_proof}

Recall that $x^* = \argmax_{x \in \mathcal{F}} r^{(\rho)}(x)$ and let $x_{t+1}^* = \arg\max_{x \in \mathcal{F}_t} r^{(\rho)}(x)$. Hence, for any $x' \in \mathcal{F}_t$,
\begin{displaymath}
r^{(\rho)}(x^*)-r^{(\rho)}(x_{t+1}^*) \leq r^{(\rho)}(x^*)-r^{(\rho)}(x').
\end{displaymath} 
We take $x'_{k} = \lfloor x^*_k/(B/\alpha_t)\rfloor (B/\alpha_t)$ for all $k \in \{1,...,K\}$, where $\lfloor x^*_k/(B/\alpha_t) \rfloor$ denotes the largest integer smaller or equal to $x^*_k/(B/\alpha_t)$, so that $x' \in \mathcal{F}_t$ and $|x'_{k}-x^*_k| \leq B/\alpha_t$ for all $k \in \{1,...,K\}$. Then, due to Lipschitz continuity of $r^{(\rho)}(.)$ on $\mathcal{F}$ with p-norm and constant $L$,
\begin{equation}
\label{eq:part1}
r^{(\rho)}(x^*)-r^{(\rho)}(x_{t+1}^*)\leq LK^{1/p} B/ \alpha_t.
\end{equation}

Since the payoff obtained at each period $t$ from bidding on a node $k \in \{1,...,K\}$ is in $[l,u]$ and $r^{(\rho)}(.)$ is Lipschitz, $r^{(\rho)}(x^*_{t+1})-r^{(\rho)}(x) \leq c_1$  for any $x \in \mathcal{F}_t$ where $c_1 = \min\left(c_2,LK^{1/p}B\right)$ and $c_2 =K\left((u-l)+\rho(u-l)^2\right)$. Then, for any $\delta_t>0$,
\begin{align*}
r^{(\rho)}(x^*_{t+1})& - r^{(\rho)}(x_{t+1}^{\text{\tiny{DPDS}}})\\
& = \sum_{x \in \mathcal{F}_t} \left(r^{(\rho)}(x^*_{t+1})-r^{(\rho)}(x)\right)\mathds{1}\left\{x_{t+1}^{\text{\tiny{DPDS}}}=x\right\} \\
&  \leq \delta_t \sum_{x \in \mathcal{F}_t:r^{(\rho)}(x^*_{t+1})-r^{(\rho)}(x) \leq \delta_t}\mathds{1}\left\{x_{t+1}^{\text{\tiny{DPDS}}}=x\right\} \\
& \quad + c_1 \sum_{x \in \mathcal{F}_t:r^{(\rho)}(x^*_{t+1})-r^{(\rho)}(x) > \delta_t}\mathds{1}\left\{x_{t+1}^{\text{\tiny{DPDS}}}=x\right\} \\
& \leq \delta_t + c_1 \sum_{x \in \mathcal{F}_t:r^{(\rho)}(x^*_{t+1})-r^{(\rho)}(x) > \delta_t}\mathds{1}\left\{x_{t+1}^{\text{\tiny{DPDS}}}=x\right\}
\end{align*}
where the last inequality is obtained by the fact that at most one of the indicator functions can be equal to one due to the events being disjoint. 

Since DPDS chooses $x \in \mathcal{F}_t$ that maximizes $\bar{r}^{(\rho)}_t(x)= \sum_{k=1}^K\bar{r}^{(\rho)}_{t,k}(x_k)$, $\bar{r}^{(\rho)}_t(x) \geq \bar{r}^{(\rho)}_t(x^*_{t+1})$ has to hold for any $x \in \mathcal{F}_t$ if $x_{t+1}^{\text{\tiny{DPDS}}}=x$. Hence, we can upper bound the last inequality obtained to get
\begin{align*}
& r^{(\rho)}(x^*_{t+1}) -r^{(\rho)}(x_{t+1}^{\text{\tiny{DPDS}}}) \\
& \leq \delta_t + c_1 \sum_{x \in \mathcal{F}_t:r^{(\rho)}(x^*_{t+1})-r^{(\rho)}(x) > \delta_t}\mathds{1}\{\bar{r}^{(\rho)}_t(x) \geq \bar{r}^{(\rho)}_t(x^*_{t+1})\}. 
\end{align*}

In order for $\bar{r}^{(\rho)}_t(x) \geq \bar{r}^{(\rho)}_t(x^*_{t+1})$ to hold for any $x \in F_t$ satisfying $r^{(\rho)}(x^*_{t+1})-r^{(\rho)}(x) > \delta_t$, observe that the event
\begin{displaymath}
\mathcal{E}_1 = \left\{\bar{r}^{(\rho)}_t(x^*_{t+1})+ \delta_t/2 \leq r^{(\rho)}(x^*_{t+1})\right\}
\end{displaymath} 
holds and/or the event
\begin{displaymath}
\mathcal{E}_2 = \left\{ r^{(\rho)}(x)+\delta_t/2 \leq \bar{r}^{(\rho)}_t(x)\right\}
\end{displaymath}
holds. Consequently,
\begin{multline*}
\hspace{-5pt}\mathbb{E}\left(r^{(\rho)}(x^*_{t+1})-r^{(\rho)}(x_{t+1}^{\text{\tiny{DPDS}}})\right) \\
\leq \delta_t + c_1  \sum_{x \in \mathcal{F}_t:r^{(\rho)}(x^*_{t+1})-r^{(\rho)}(x) > \delta_t}  \text{Pr}\left(\mathcal{E}_1 \cup \mathcal{E}_2\right) .
\end{multline*}

Also, observe that, for any fixed $x \in \mathcal{F}$, $\mathbb{E}\left(\bar{r}^{(\rho)}_t(x)\Big|x\right)=r^{(\rho)}(x)$. So, we can use McDiarmid's inequality \cite{McDiarmid:89} to upper bound both $\text{Pr}(\mathcal{E}_1)$ and $\text{Pr}(\mathcal{E}_2)$ if we show that $\bar{r}^{(\rho)}_t(x)$ for fixed $x \in \mathcal{F}_t$ satisfies the bounded differences condition as a function of $\left\{(\lambda_i,\pi_i)\right\}_{i=1}^t \in \Pi^t$ where $\Pi$ denotes the support of the random variable $(\lambda_i,\pi_i)$.

Recall that $\nu_{i,k}$ denotes $(\pi_{i,k}-\lambda_{i,k})$. Define $\bar{r}_{t}^{(-j)}(x) = \sum_{k=1}^K\bar{r}_{t,k}^{(-j)}(x_k)$ where 
\begin{displaymath}
\bar{r}_{t,k}^{(-j)}(x_k) = \frac{1}{t-1}\sum_{i:i\neq j, 1\leq i\leq t} \nu_{i,k}\mathds{1}\{x_k \geq \lambda_{i,k}\},
\end{displaymath} 
and define $\bar{v}_{t}^{(-j)}(x) = \sum_{k=1}^K\bar{v}_{t,k}^{(-j)}(x_k)$ where 
\begin{multline*}
\hspace{-5pt} \bar{v}_{t,k}^{(-j)}(x_k) \\
= \frac{1}{t-1}\sum_{i:i\neq j, 1\leq i\leq t} \left(\nu_{i,k}\mathds{1}\{x_k \geq \lambda_{i,k}\}-\bar{r}_{t,k}^{(-j)}(x_k)\right)^2.
\end{multline*} 
Then, for any $j \in \{1,...,t\}$, we can express $\bar{r}^{(\rho)}_t(x)$ as follows:
\begin{align*}
\bar{r}^{(\rho)}_t(x) & = h^{(-j)}_t(x)+\frac{1}{t}\sum_{k=1}^K\nu_{j,k}\mathds{1}\{x_k \geq \lambda_{j,k}\} \\
& \quad -\frac{\rho}{t}\sum_{k=1}^K \left(\nu_{j,k}\mathds{1}\{x_k \geq \lambda_{j,k}\}-\bar{r}_{t,k}^{(-j)}(x_k)\right)^2
\end{align*}
where 
\begin{displaymath}
h^{(-j)}_t(x)= \frac{t-1}{t} \bar{r}_{t}^{(-j)}(x) -\rho \bar{v}_{t}^{(-j)}(x)
\end{displaymath} 
doesn't depend on $(\lambda_j,\pi_j)$. We also define $\bar{r}^{(\rho,j')}_t(x)$ as
\begin{align*}
\bar{r}^{(\rho,j')}_t(x)& = h^{(-j)}_t(x)+\frac{1}{t}\sum_{k=1}^K\nu_{j',k}\mathds{1}\{x_k \geq \lambda_{j',k}\} \\
& \quad -\frac{\rho}{t}\sum_{k=1}^K \left(\nu_{j',k}\mathds{1}\{x_k \geq \lambda_{j',k}\}-\bar{r}_{t,k}^{(-j)}(x_k)\right)^2.
\end{align*}

Recall that, for any $(\lambda_i,\pi_i) \in \Pi$, $x\in \mathcal{F}$ and $k \in \{1,..,K\}$, $l \leq \nu_{i,k}\mathds{1}\{x_k \geq \lambda_{i,k}\} \leq u$. Therefore, for any $j \in \{1,...,t\}$ and $x\in \mathcal{F}$, $\bar{r}^{(\rho)}_t(x),\bar{r}^{(\rho,j')}_t(x) \in \left[h^{(-j)}_t(x)+K(l-\rho(u-l)^2)/t,h^{(-j)}_t(x)+Ku/t\right]$ for any $\left\{(\lambda_i,\pi_i)\right\}_{i=1}^t,(\lambda_{j'},\pi_{j'})\in \Pi^{t+1}$. Hence, for any $x \in \mathcal{F}$ and $j \in \{1,...,t\}$,
\begin{multline*}
\sup_{\{(\lambda_i,\pi_i)\}_{i=1}^t,(\lambda_{j'},\pi_{j'})\in \Pi^{t+1}} \left|\bar{r}_t^{(\rho)}(x)-\bar{r}_t^{(\rho)(j')}(x)\right| \leq \frac{c_2}{t}. 
\end{multline*}

Since bounded differences condition holds, by McDiarmid's inequality, both $\text{Pr}(\mathcal{E}_1)$ and $\text{Pr}(\mathcal{E}_2)$ are upper bounded by $\exp\left(-t\delta_t^2/\left(2c_2^2\right)\right)$.  Using the fact that the cardinality of the set $\left\{x \in \mathcal{F}_t:r^{(\rho)}(x^*_{t+1})-r^{(\rho)}(x) > \delta_t\right\}$ is upper bounded by $\alpha_t^K+K \leq 2\alpha_t^K$ for $\alpha_t \geq 2$ and $\text{Pr}(\mathcal{E}_1 \cup \mathcal{E}_2) \leq \text{Pr}(\mathcal{E}_1) +\text{Pr}(\mathcal{E}_2)$, we get
\begin{equation}
\label{eq:part2}
\hspace{-5pt} \mathbb{E}\left(r^{(\rho)}(x^*_{t+1})-r^{(\rho)}(x_{t+1}^{\text{\tiny{DPDS}}})\right) 
\leq  \delta_t + 4c_1 \alpha_t^K \exp\left(-\frac{t\delta_t^2}{2c_2^2}\right). 
\end{equation}

By setting $\delta_t =c_2\sqrt{2(\gamma+1)K + 1} \sqrt{\log{t}/t}$ and $\alpha_t =\max(\lceil \alpha t^\gamma \rceil,2)$ with $\gamma\geq 1/2$ and $\alpha>0$, from (\ref{eq:part1}) and (\ref{eq:part2}), we obtain
\begin{align*}
\hspace{5pt} & \mathbb{E}(r^{(\rho)}(x^*)-r^{(\rho)}(x_{t+1}^{\text{\tiny{DPDS}}})) \\
& =  \mathbb{E}(r^{(\rho)}(x^*)-r^{(\rho)}(x_{t+1}^*)) +  \mathbb{E}(r^{(\rho)}(x^*_{t+1})-r^{(\rho)}(x_{t+1}^{\text{\tiny{DPDS}}}))  \\
& \leq LK^{1/p}B/\alpha_t + C_1 \sqrt{\log{t}/t}  + 4c_1\alpha_t^K t^{-(\gamma+1)K-1/2}\\
& \leq  \left(LK^{1/p}B/\alpha+4c_1 \max\left(t^{-K/2},((\alpha+1)/t)^K \right)\right)t^{-1/2}\\
& \quad + C_1 \sqrt{\log{t}/t}\\
& \leq  C_1 \sqrt{\log{t}/t} +C_2 t^{-1/2},
\end{align*}
where $C_1=c_2 \sqrt{2(\gamma+1)K + 1}$ and $C_2=LK^{1/p}B/\alpha+4c_1 \max(1,\alpha^K )$.

For any $T \geq 2$, $\sum_{t=1}^{T-1} 1/\sqrt{t} \leq 2\sqrt{T-1}-1$ and $\sum_{t=1}^{T-1} \sqrt{\log{t}/t} \leq 2\sqrt{(T-1)\log(T-1)}$. Hence, for $T > 2$,
\begin{align*}
\sum_{t=2}^{T-1}& \mathbb{E}\left(r^{(\rho)}(x^*)-r^{(\rho)}(x_{t+1}^{\text{\tiny{DPDS}}})\right)\\
& \leq C_1 \sum_{t=1}^{T-1} \sqrt{\frac{\log{t}}{t}} + C_2\sum_{t=1}^{T-1}\frac{1}{ \sqrt{t}} \\
& \leq  2 C_1 \sqrt{(T-1)\log(T-1)} + C_2\left(2\sqrt{T-1}-1\right) .
\end{align*}
Since $\mathbb{E}\left(r^{(\rho)}(x^*)-r^{(\rho)}(x_{t}^{\text{\tiny{DPDS}}})\right) \leq c_1$, for any $T \geq 1$,
\begin{equation*}
\mathcal{R}_T^{\text{\tiny{DPDS}}}(f) \leq 2C_1 \sqrt{T\log{T}} + 2 C_2 \sqrt{T}
\end{equation*} 
and for any  $T > 1$,
\begin{equation*}
\mathcal{R}_T^{\text{\tiny{DPDS}}}(f) \leq 2 (C_1+C_2) \sqrt{T\log{T}}.
\end{equation*} \hfill \IEEEQED


\section{Proof of Theorem \ref{thm:lower}}
\label{appendix:lower_proof}
Let $\lambda_t$ and $\pi_t$ be independent random variables with distributions 
\begin{displaymath}
f_\lambda(\lambda_t)= \epsilon^{-1}\mathds{1}\{(1-\epsilon)/2 \leq \lambda_t \leq (1+\epsilon)/2\}
\end{displaymath}
and $f_{\pi}(\pi_t) = \text{Bernoulli}(\bar{\pi})$, respectively. Let $f(\lambda_t,\pi_t) =f_\lambda(\lambda_t) f_{\pi}(\pi_t)$ and $\epsilon= T^{-1/2}/2\sqrt{5}$.

Fix any policy $\mu$. Since $\lambda_t$ and $\pi_t$ are independent,
\begin{equation*}
\label{eq:payoff_independent}
r^{(0)}(x) = \mathbb{E}((\bar{\pi}-\lambda_t)\mathds{1}\{x \geq \lambda_t\}|x)
\end{equation*}
and
\begin{multline}
\label{eq:incregret_independent}
r^{(0)}(x^*) - r^{(0)}(x_t^\mu) \\
= \mathbb{E}((\bar{\pi}-\lambda_t)(\mathds{1}\{x^* \geq \lambda_t\} - \mathds{1}\{x_t^\mu \geq \lambda_t\})|x_t^\mu,x^*)
\end{multline} 

Let $f_0$, $f_1$, $f_2$ denote the distribution of $\{\lambda_t, \pi_t\}_{t=1}^T$ and policy $\mu$ under the choice of $\bar{\pi}=1/2$, $\bar{\pi}=1/2-\epsilon$, and $\bar{\pi}=1/2+\epsilon$, respectively. Also, let $\mathbb{E}_i(.)$  and $\mathcal{R}_T^\mu(f_i)$ denote the expectation with respect to the distribution $f_i$ and the regret of policy $\mu$ under  distribution $f_i$, respectively.  

Under distribution $f_1$, observe that $\bar{\pi}-\lambda_t \leq -\epsilon/2 $ for any value of $\lambda_t$ . Therefore, optimal solution under known distribution $x^* \in [0,(1-\epsilon)/2]$ so that $\mathds{1}\{x^* \geq \lambda_t\} = 0$. Then, by (\ref{eq:incregret_independent}), the regret can be expressed as
\begin{align*}
\mathcal{R}_T^\mu(f_1) & = \mathbb{E}_{1}\left(\sum_{t=1}^T -(\bar{\pi}-\lambda_t)\mathds{1}\{x_t^\mu \geq \lambda_t\}\right) \\ 
& \geq \frac{\epsilon}{2}\mathbb{E}_{1}\left(\sum_{t=1}^T\mathds{1}\{x_t^\mu \geq \lambda_t\}\right).
\end{align*}

Similarly, under distribution $f_2$, observe that $\bar{\pi}-\lambda_t \geq \epsilon/2 $ for any value of $\lambda_t$ . Therefore, optimal solution under known distribution $x^* \in [(1+\epsilon)/2, 1]$ so that $\mathds{1}\{x^* \geq \lambda_t\} = 1$. Then, by (\ref{eq:incregret_independent}), the regret becomes
\begin{align*}
\mathcal{R}_T^\mu(f_2) & = \mathbb{E}_{2}\left(\sum_{t=1}^T(\bar{\pi}-\lambda_t)\mathds{1}\{x_t^\mu < \lambda_t\}\right) \\ & \geq \frac{\epsilon}{2}\mathbb{E}_{2}\left(\sum_{t=1}^T\mathds{1}\{x_t^\mu < \lambda_t\}\right).
\end{align*}

For any non-negative bounded function $h$ defined on information history $I_T=\{x_t,\lambda_t,\pi_t\}_{t=1}^T$ such that  $0 \leq h(I_T) \leq M $ for some $M \geq 0$ and for any distributions $p$ and $q$, the difference between the expected value of $h$ under the distributions $p$ and $q$ is bounded by a function of the KL-divergence between these distributions as follows:
\begin{align}
\label{eq:lb_lemma}
\mathbb{E}_q(h(I_T)) -  \mathbb{E}_p & (h(I_T)) \notag \\
& \leq \int_{q(I_T)>p(I_T)} h(I_T)(q(I_T)-p(I_T))dI_T \notag \\
& \leq M \int_{q(I_T)>p(I_T)} q(I_T)-p(I_T)dI_T \notag \\
& = M \frac{1}{2} \int |q(I_T)-p(I_T)| dI_T \notag \\
& \leq M\sqrt{\text{KL}(q||p)/2}.
\end{align}
where $\text{KL}(q||p) = \int q(I_T) \log(q(I_T)/p(I_T)) dI_T$ is the KL-divergence between $q$ and $p$ and the last inequality is due to Pinsker's inequality \cite{Tsybakov:09}, \textit{i.e.},  $V(q,p) \leq \sqrt{\text{KL}(q||p)/2}$ where $V(q,p) = \int |q(I_T)-p(I_T)| dI_T/2$ is the variational distance between $q$ and $p$. The bound given in (\ref{eq:lb_lemma}) is inspired by a similar bound obtained by \cite{Aueretal:02b} in the proof of Lemma A.1 for the case of discrete distribution in the context of non-stochastic multi-armed bandit problem.  

Now, since $\sum_{t=1}^T\mathds{1}\{x_t^\mu \geq \lambda_t\} \leq T$ and  $\sum_{t=1}^T\mathds{1}\{x_t^\mu < \lambda_t\} \leq T$, we use (\ref{eq:lb_lemma}) to obtain
\begin{displaymath}
\mathcal{R}_T^\mu(f_1)  \geq \frac{\epsilon}{2}\mathbb{E}_{0}\left(\sum_{t=1}^T\mathds{1}\{x_t^\mu \geq \lambda_t\}\right) - \frac{\epsilon}{2} T \sqrt{KL(f_0||f_1)/2},
\end{displaymath}
and
\begin{displaymath}
\mathcal{R}_T^\mu(f_2)  \geq \frac{\epsilon}{2}\mathbb{E}_{0}\left(\sum_{t=1}^T\mathds{1}\{x_t^\mu < \lambda_t\}\right) - \frac{\epsilon}{2} T \sqrt{KL(f_0||f_2)/2} .
\end{displaymath} 
Consequently,
\begin{align}
\label{eq:worstcase}
\max_{i \in \{1,2\}} & \mathcal{R}_T^\mu (f_i) \notag \\
& \geq \frac{1}{2}\left(\mathcal{R}_T^\mu(f_1)+\mathcal{R}_T^\mu(f_2)\right) \notag \\
& \geq \frac{\epsilon}{4} \left(T-T\sqrt{\text{KL}(f_0||f_1)/2}-T\sqrt{\text{KL}(f_0||f_2)/2}\right).
\end{align}
For any $i \in \{0,1,2\}$, we can express the distribution of observations in terms of conditional distributions as follows;
\begin{equation*}
\begin{split}
f_i(I_T) & = \prod_{t=1}^T f_i(\pi_t,\lambda_t|x_t^\mu, I_{t-1})f_i(x_t^\mu|I_{t-1})\\
& = \prod_{t=1}^T f_i(\pi_t)f_{\lambda}(\lambda_t) f(x_t^\mu|I_{t-1}),
\end{split}
\end{equation*}
where the second equality is due to the independence of $\lambda_t$ and $\pi_t$ from the past observations $I_{t-1}$, the bid $x_t^\mu$, and from each other. Also, the distribution of $x_t^\mu$ given $I_{t-1}$ does not depend on $i$. Consequently, for $i \in \{1,2\}$,
\begin{equation*}
\begin{split}
\text{KL}(f_0||f_i) & = \int f_0(I_T) \log\left( \prod_{t=1}^T \frac{f_0(\pi_t)}{f_i(\pi_t)}\right)dI_T\\
& = \sum_{t=1}^T \int f_0(I_T) \log\left(  \frac{f_0(\pi_t)}{f_i(\pi_t)}\right) dI_T\\
& = \sum_{t=1}^T  \left(\frac{1}{2}\log\left(\frac{1/2}{1/2+\epsilon}\right)+ \frac{1}{2}\log\left(\frac{1/2}{1/2-\epsilon}\right)\right) \\
& = -(T/2)\log\left(1-4\epsilon^2\right).
\end{split}
\end{equation*}
Then, by (\ref{eq:worstcase}) and by setting $\epsilon= T^{-1/2}/2\sqrt{5}$, we get
\begin{equation*}
\begin{split}
\max_{i \in \{1,2\}} \mathcal{R}_T^\mu(f_i)&  \geq \frac{\epsilon T}{4} \left(1-\sqrt{ -T\log\left(1-4\epsilon^2\right)}\right)\\
& = \frac{\sqrt{T}}{8\sqrt{5}} \left(1-\sqrt{ -T\log\left(1-1/(5T)\right)}\right) \\
& \geq  \frac{\sqrt{T}}{16\sqrt{5}} 
\end{split}
\end{equation*}
where the last inequality follows from the fact that $-\log(1-x)\leq (5/4)x$ for $0 \leq x\leq 1/5$. 

Observe that the magnitude of the derivative of $r^{(0)}(x)$ is equal to $|\bar{\pi}-x|/\epsilon$ for $(1-\epsilon)/2 \leq x \leq (1+\epsilon)/2$ and $0$ otherwise. So, for distributions $f_1$ and $f_2$, $r^{(0)}(x)$ is Lipschitz continuous with Lipschitz constant $L=3/2$ because $|\bar{\pi}-x|/\epsilon \leq 3/2$ for $(1-\epsilon)/2 \leq x \leq (1+\epsilon)/2$. Hence, assumptions \ref{assumption:iid}, \ref{assumption:bounded}, and \ref{assumption:lipschitz} are satisfied for both distributions. \hfill \IEEEQED

 \bibliographystyle{IEEEtran}
\bibliography{reference}

\end{document}